\setlist[enumerate,1]{label=(\roman*)}
\newcommand{\deltac}{\delta^\text{\normalfont{ctrl}}}
\newcommand{\deltae}{\delta^\text{est}}
\newcommand{\Nsens}{N_\text{sen}}
\newcommand{\Nest}{N_\text{est}}
\newcommand{\Nall}{N}
\newcommand{\Ic}{I^\text{c}}
\newcommand{\transpose}{\text{T}}
\DeclareMathOperator*{\avg}{avg}
\DeclareMathOperator*{\diag}{diag}
\newcommand{\Ts}{T_\text{s}}
\newcommand{\field}[1]{\mathbb{#1}}
\newcommand{\R}{\field{R}}
\newcommand{\N}{\field{N}}
\newcommand{\E}{\mathbb{E}}
\newcommand{\ie}{i\/.\/e\/.\/~}
\newcommand{\eg}{e\/.\/g\/.\/~}
\newcommand{\cf}{cf\/.\/~}
\newcommand{\fig}{Fig\/.\/~}
\newcommand{\tab}{Table~}
\newcommand{\sect}{Section~}
\newcommand{\The}{Theorem~}
\newcommand{\Lem}{Lemma~}
\newcommand{\Pro}{Problem~}
\newcommand{\Assump}{Assumption~}
\newcommand{\etal}{\emph{et al.}}
\newtheorem{theorem}{Theorem}
\newtheorem{lemma}{Lemma}
\newtheorem{corollary}{Corollary}
\newtheorem{assum}{Assumption}
\newtheorem{remark}{Remark}
\newtheorem{problem}{Problem}
\newcommand{\mytitle}{\textbf{Accepted final version.}
This paper is a postprint of a paper submitted to and accepted for publication in IET Control Theory \& Applications
and is subject to Institution of Engineering and Technology Copyright. The copy of record is available at the IET Digital Library.
(doi: 10.1049/iet-cta.2016.1021)
}
\begin{document}

\title{Event-based State Estimation: An Emulation-based Approach}

\author[1*]{Sebastian Trimpe}
\affil{Autonomous Motion Department, Max Planck Institute for Intelligent Systems, Spemannstr.~38, 72076 T\"ubingen, Germany}
\affil[*]{strimpe@tuebingen.mpg.de}

\abstract{
An event-based state estimation approach for reducing communication in a networked control system is proposed. Multiple distributed sensor agents observe a dynamic process and sporadically transmit their measurements to estimator agents over a shared bus network.  Local event-triggering protocols ensure that data is transmitted only when necessary to meet a desired estimation accuracy.  The event-based design is shown to emulate the performance of a centralized state observer design up to guaranteed bounds, but with reduced communication.  The stability results for state estimation are extended to the distributed control system that results when the local estimates are used for feedback control.  Results from numerical simulations and hardware experiments illustrate the effectiveness of the proposed approach in reducing network communication.
%
}

\maketitle

\thispagestyle{fancy}	

\section{Introduction}
\label{sec:intro}
In almost all control systems today, data is processed and transferred between the system's components periodically.
%
While periodic system design is often convenient and well understood,
it involves an inherent limitation: data is processed and transmitted at  predetermined time instants, irrespective of the current state of the system or the information content of the data. That is, system resources are used regardless of whether there is any need for processing and communication or not. This becomes prohibitive when resources are scarce, such as in networked or cyber-physical systems, where multiple agents share a communication medium.
%

Owing to the limitations of traditional design methodologies for resource-constrained problems, aperiodic or event-based 
strategies have recently received a lot of attention \cite{Le11,Lu16}.
With event-based methods, data is transmitted or processed only when certain \emph{events} indicate that an update is required, for example, to meet some control or estimation specification.  Thus, resources are used \emph{only when required} and saved otherwise.

\begin{figure}[tb]
\centering
\includegraphics[width=0.7\columnwidth]{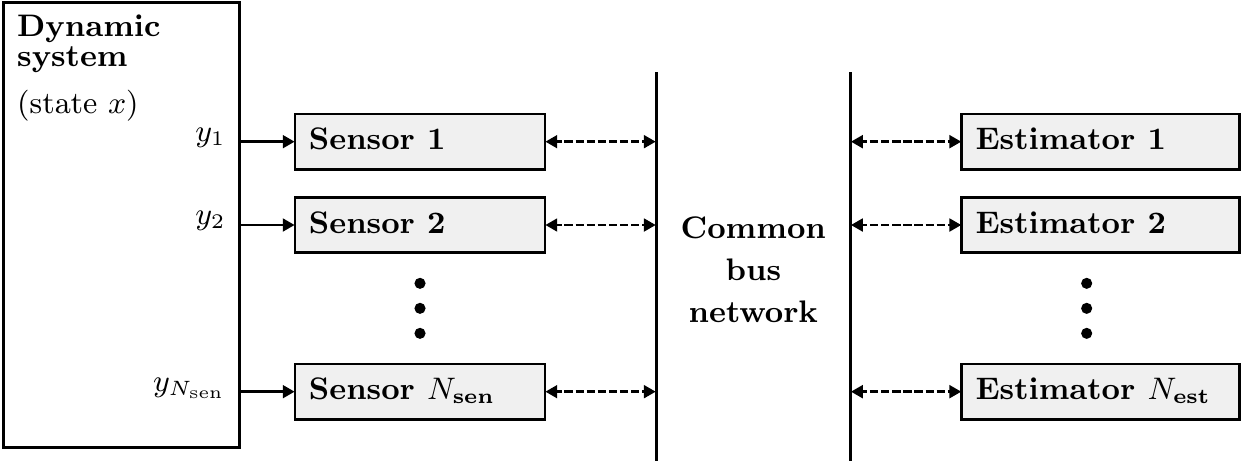}
\caption{Distributed state estimation problem.
Multiple distributed sensors make observations $y_i$ of a dynamic system and communicate to estimator nodes via a common bus network.   The development of an event-based scheme allowing all estimators to estimate the full system state $x$, but with limited inter-agent communication, is the objective of this article.  
}
\label{fig:networkArchitecture}
\end{figure}

In this article, a novel event-based scheme for distributed state estimation is proposed. We consider the system shown in \fig \ref{fig:networkArchitecture}, where multiple sensors observe a dynamic system and transmit data to estimator agents over a common bus.  Each estimator agent shall estimate the full state of the dynamic system, for example, for the purpose of monitoring or control.
In order to limit network traffic, 
local event triggers on each sensor ensure that updates are sent only when needed.  The common bus ensures that transmitted data reaches all agents in the network, which will allow for efficient triggering decisions and the availability of full state information on all agents.
%
%

The proposed approach for distributed event-based estimation emulates a classic 
discrete-time state observer
design up to guaranteed bounds, but with limited communication. \emph{Emulation-based design} is one common approach in event-based control literature (see \cite{Lu16}), where an event-based control system is designed so as to 
emulate the behavior of a given continuous or periodic control system.  However, to the best of the author's knowledge, emulation-based design has not been considered for state estimation.  
While the focus of this article is on state estimation, we also show stability of the event-based control system resulting when local estimates are used for feedback control. 

In particular, this articles makes the following main contributions:
\begin{enumerate}
\item First emulation-based design for distributed event-based state estimation replicating a centralized discrete-time linear observer.


\item Stability proofs for the resulting distributed and switching estimator dynamics under generic communication or computation imperfections (bounded disturbances).

\item Extension 
to distributed event-based control, where local estimates are used for feedback.

\item Experimental validation on an unstable networked control system.
\end{enumerate}


Preliminary results of those herein were presented in the conference papers \cite{Tr12,Tr14}; this article has been completely rewritten and new results added. 

\pagestyle{plain}

\subsection{Related work}

Early work on event-based state estimation (EBSE) concerned problems with a single sensor and estimator node
(see \cite{Le11} and references therein).
Typically, the optimal triggering strategies have
time-varying thresholds for finite-horizon problems, and constant thresholds for infinite-horizon problems, \cite[p.~340]{Le11}. Because long-term behavior (stability) is of primary interest herein,
we consider constant thresholds.

Different types of stationary triggering policies have been suggested in literature.  With the \emph{send-on-delta} (SoD) protocol \cite{Mi06}, transmissions are triggered based on the difference of the current and last-transmitted measurement.   
%
\emph{Innovation-based triggering} \cite{TrDAn11}
 places a threshold on the measurement innovation; that is, the difference of the current measurement and its prediction based on a process model.  
%
%
Wu \etal \cite{WuJiJoSh13} use the same trigger, but apply a transformation to decorrelate the innovation. Considering the variance of the innovation instead yields \emph{variance-based triggering} \cite{TrDAn14b}.
%
Marck and Sijs \cite{MaSi10} proposed \emph{relevant sampling}, where the relative entropy of prior and posterior state distribution is employed as a measure of information gain. 
%
We use innovation-based triggers herein, which have been shown to be 
effective for EBSE, \cite{TrCa15}.
Different estimation algorithms have been proposed for EBSE, with particular emphasis on how to (approximately) incorporate information contained in `negative' events (instants when no data is transmitted),  \cite{SiLa12,SiNoHa13,ShChSh15}.  
%
%
If one ignores the extra information from negative events in favor of a straightforward implementation, a time-varying Kalman filter (KF) can be used (\eg \cite{TrDAn11}).  Herein, we use the same structure as the standard KF, but with pre-computed switching gains, thus achieving the lowest computational complexity of all mentioned algorithms. 

To the best of the author's knowledge, \emph{distributed} EBSE with multiple sensor/estimator nodes and general coupled dynamics was first studied in \cite{TrDAn11}.  
 While Yook \etal \cite{YoTiSo02} had previously proposed the use of state estimators for the purpose of saving communication, they do not employ state estimation in the usual sense.  Instead of \emph{fusing} model-based predictions with incoming data, they \emph{reset} parts of the state vector.
%
%
%
%
Later results on distributed EBSE include \cite{BaBeCh12,ShChSh14c,ShChSh15,YaZhZhYa14,LiWaHeZh15,WeMoSiHaSh16}.  In contrast to the scenario herein, they consider either a centralized fusion node, or simpler SoD-type triggers, which are less effective for estimation, \cite{TrCa15}.
%
None of the mentioned references treats the problem of emulating a centralized observer design with a distributed and event-triggered implementation.

When the event-based state estimators are connected to state-feedback controllers (discussed in \sect \ref{sec:control}), this represents a \emph{distributed event-based control} system.  
Wang \etal \cite{WaLe11} and Mazo Jr \etal \cite{MaTa11} 
were among the first to discuss distributed or decentralized 
event-based control.
In contrast to these works, we neither assume perfect state measurements, nor a centralized controller as in  \cite{MaTa11}, nor have a restriction on the dynamic couplings \cite{WaLe11}, but we rely on a common bus network supporting all-to-all communication.
\subsection{Notation}
\label{sec:notation}
The terms \emph{state observer} and \emph{state estimator} are used synonymously in this article.
$\R$, $\N$, and $\N_N$ denote real numbers, positive integers, and the set $\{1, 2, \dots, N\}$, respectively.  
Where convenient, vectors are expressed as tuples $(v_1, v_2, \dots)$, where $v_i$ may be vectors themselves, with dimension and stacking clear from context.
For a vector $v$ and matrix $A$, $\|v\|$ denotes some vector H\"older norm \cite[p.~344]{Be05}, and $\|A\|$ the induced matrix norm.
%
For a 
sequence $v = \{v(0), v(1), \dots \}$, $\|v\|_\infty$ denotes the $\ell^\infty$ norm $\|v\|_\infty := \sup\nolimits_{k\geq 0} \, \|v(k)\|$.
%
%
%
For an estimate of $x(k)$ computed from measurement data until time $\ell \leq k$, we write $\hat{x}(k|\ell)$; and use $\hat{x}(k)=\hat{x}(k|k)$.  
A matrix is called stable if all its eigenvalues have magnitude strictly less than one.
Expectation is denoted by $\E[\cdot]$.

\section{Problem statement: distributed state estimation with reduced communication}
\label{sec:problemFormulation}
We introduce the considered networked dynamic system and state the estimation problem addressed in this article.

\subsection{Networked dynamic system}
We consider the networked estimation scenario in \fig \ref{fig:networkArchitecture}.  The dynamic system is described by linear discrete-time dynamics
\begin{align}
x(k) &= A x(k-1) + B  u(k-1) + v(k-1) \label{eq:system_x} \\
y(k) &= C  x(k) +w(k) \label{eq:system_y}
\end{align}
with sampling time $\Ts$, state $x(k) \in \R^n$, control input $u(k) \in \R^{q}$, measurement $y(k) \in \R^{p}$, disturbances $v(k) \in \R^n$, $w(k) \in \R^{p}$, and all matrices of corresponding dimensions.  We assume that $(A,B)$ is stabilizable and $(A,C)$ is detectable.  
No specific assumptions on the characteristics of the disturbances $v(k)$ and $w(k)$ are made; they can be random variables or deterministic disturbances. 

Each of the $\Nsens$ sensor agents (\cf \fig \ref{fig:networkArchitecture}) observes part of the dynamic process through
measurements $y_i(k) \in \R^{p_i}$, $i \in \N_{\Nsens}$.  The vector $y(k)$ thus represents the collective measurements of all $\Nsens$ agents,
\begin{align}
y(k) &= ( y_1(k), y_2(k), \dots, y_{\Nsens}(k) ) \\
y_i(k) &= C_i x(k) +w_i(k)  \qquad \forall \, i \in \N_{\Nsens}  \label{eq:system_yi}
\end{align}
with $C_i \in \R^{p_i \times n}$ and $w_i(k) \in \R^{p_i}$.
Agents can be heterogeneous with different types and dimensions of measurements, and no local observability assumption is made (\ie $(A,C_i)$ can be not detectable).

Each of the $\Nest$ estimator agents (\cf \fig \ref{fig:networkArchitecture}) shall reconstruct the full state 
for the purpose of, for example, having full information at different monitoring stations, distributed optimal decision making,
or local state-feedback control.
Overall, there are $\Nall = \Nsens + \Nest$ agents, and we use $i=1, \dots, \Nsens$ to index the sensor agents, and $i=1+\Nsens, \dots, \Nest+\Nsens$ for the estimator agents.

While the primary concern is the development of an event-based approach to the \emph{distributed state estimation} problem in \fig \ref{fig:networkArchitecture}, we shall also address \emph{distributed control} when the local estimates are used for feedback.
For this, we consider the control input decomposed as 
\begin{equation}
u(k) = ( u_1(k), u_2(k), \dots, u_{\Nest}(k) )
\label{eq:u_decomposed}
\end{equation}
with $u_i(k) \in \R^{q_i}$ the input computed on estimator agent $i + \Nsens$.

All agents are connected over a common-bus network; that is, if one agent communicates, all agents will receive the data.  
We assume that the network bandwidth is such that, in the worst case, all agents can communicate in one time step $\Ts$, and contention is resolved via low-level protocols.
Moreover, agents are assumed to be synchronized in time, and network communication is abstracted as instantaneous.

\begin{remark}
The common bus is a key component of the developed event-based approach.  
It will allow the agents to compute consistent estimates and use these for effective triggering decisions (while inconsistencies can still happen due to data loss  or delay).
Wired networks with a shared bus architecture 
such as Controller Area Network (CAN) or other 
fieldbus systems
are common in industry \cite{Th05}.
%
%
%
%
%
%
Recently, Ferrari \etal\ \cite{FeZiMoTh12} 
have proposed a common bus concept also for multi-hop low-power wireless networks.  
\end{remark}

\subsection{Reference design}
\label{sec:referenceDesign}
We assume that a centralized, discrete-time state estimator design is given, which we seek to emulate with the event-based design to be developed herein:
\begin{align}
\hat{x}_\text{c}(k|k-1) &= A  \hat{x}_\text{c}(k-1|k-1) + B  u(k-1) \label{eq:FCSE1} \\
\hat{x}_\text{c}(k|k) &= \hat{x}_\text{c}(k|k-1) + L  \big(y(k) - C \, \hat{x}_\text{c}(k|k-1) \big) 
\label{eq:FCSE2}
\end{align}
where the estimator gain $L \in \R^{n \times p}$ has been designed to achieve desired estimation performance, and the estimator is initialized with some $\hat{x}_\text{c}(0) = \hat{x}_\text{c}(0|0)$.
For example, \eqref{eq:FCSE1}, \eqref{eq:FCSE2} can be a Kalman filter representing the optimal Bayesian estimator for Gaussian noise,
or a Luenberger observer designed via pole placement 
to achieve a desired dynamic response.
At any rate, a reasonable observer design will ensure stable estimation error dynamics
\begin{align}
\epsilon_\text{c}(k) &:= x(k) - \hat{x}_\text{c}(k) = (I\!-\!LC)A \epsilon_\text{c}(k\!-\!1) + (I\!-\!LC)v(k\!-\!1) - Lw(k).
\label{eq:closedLoopCentralized_est}
\end{align}
We thus assume that $(I-LC)A$ is stable, which is always possible since $(A,C)$ is detectable.
It follows \cite[p.~212--213]{CaDe91} that there exist $m_\text{c}>0$ and $\rho_\text{c} \in [0,1)$ such that
\begin{equation}
\|((I-LC)A)^k \| \leq m_\text{c} \rho_\text{c}^k .
\label{eq:expStabCentral}
\end{equation}

\subsection{Problem statement}
\label{sec:objective}

The main objective of this article is 
an EBSE design that approximates the reference design of \sect \ref{sec:referenceDesign}
with guaranteed bounds:
%
\begin{problem}
\label{pro:EBSE}
Develop a distributed EBSE design for the scenario in \fig \ref{fig:networkArchitecture}, where each estimator agent ($i = \Nsens, \dots, \Nsens + \Nest$) locally computes an estimate $\hat{x}_i(k)$ of the state $x(k)$, and each sensor agent ($i = 1, \dots, \Nsens$) makes individual transmit decisions for its local measurements $y_i(k)$.  The design shall emulate the centralized estimator \eqref{eq:FCSE1}, \eqref{eq:FCSE2} bounding the difference $\|\hat{x}_\text{c}(k) - \hat{x}_i(k)\|$, but with reduced communication of sensor measurements. 
\end{problem}
%
Furthermore, we address distributed control based on the EBSE design:
\begin{problem}
\label{pro:EBC}
Design distributed control laws for computing control inputs $u_i(k)$ (\cf \eqref{eq:u_decomposed}) locally 
from the event-based estimates $\hat{x}_i(k)$ so as to 
achieve stable closed-loop dynamics (bounded $x$).
\end{problem}

For state estimation in general, both the measurement signal $y$ and the control input $u$ must be known (\cf \eqref{eq:FCSE1}, \eqref{eq:FCSE2}).  
For simplicity, we first focus on the reduction of sensor measurements and assume
\begin{assum}
\label{ass:knownInput}
The input $u$ is known by all agents. 
\end{assum} 
\noindent
This is the case, for example,  when estimating a process without control input (\ie $u=0$),
when $u$ is an a-priori known reference signal, or when $u$ is broadcast periodically over the shared bus.
In particular, if the components $u_i(k)$ 
are computed by different agents as in \Pro \ref{pro:EBC}, \Assump \ref{ass:knownInput} requires the agents to exchange their inputs over the bus at every step $k$.
Reducing measurement communication, but periodically exchanging inputs may be a viable solution when there are more measurements than control inputs (as is the case for the experiment presented in \sect \ref{sec:experimentsBC}).

Later, in \sect \ref{sec:control}, an extension of the results is presented, which does not require \Assump \ref{ass:knownInput} and periodic exchange of inputs by employing event-triggering protocols also for the inputs.
\section{Event-based state estimation with a single sensor-estimator link}
\label{sec:singleAgent}
In order to develop the main ideas of the EBSE approach, we first consider \Pro \ref{pro:EBSE} for the simpler, but relevant special case with $\Nsens = \Nest = 1$; that is, a single sensor transmits data over a network link to a remote estimator (also considered in \cite{Le11,MaSi10,SiLa12,WuJiJoSh13,TrCa15}, for instance).
%
For the purpose of this section, we make the simplifying assumption of a prefect communication link:
\begin{assum}
\label{ass:idealComm}
Transmission from sensor to estimator is instantaneous
and no data is  lost.
\end{assum}
\noindent
For a sufficiently fast network link,
this may be ensured by low-level protocols using acknowledgments and re-transmissions.  However, this assumption is made for the sake of simplicity in this section, and omitted again in the later sections.

We propose the event-based architecture depicted in \fig \ref{fig:EBSE_SingleAgent}. 
The key idea is to replicate the remote state estimator at the sensor; the sensor agents then knows what the estimator knows, and thus also when the estimator is in need of new data.  
The \emph{State Estimator} and \emph{Event Trigger},
which together form the EBSE algorithm, are explained next.

\begin{figure}[tb]
\centering
\subfigure[Single sensor/estimator agent]{%
    \label{fig:EBSE_SingleAgent}
    \includegraphics[width=0.65\columnwidth]{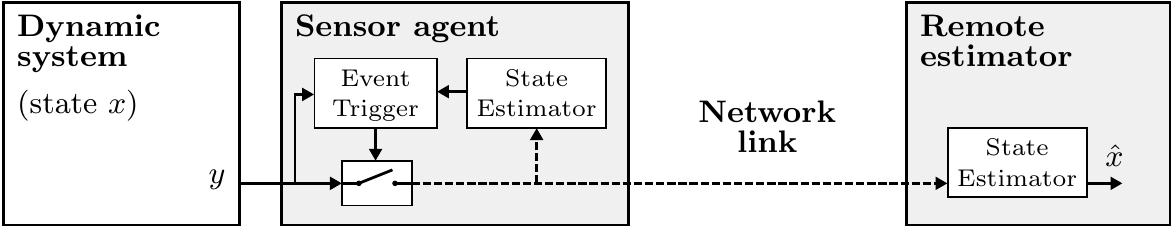}}
\\[5mm]
\subfigure[Multiple sensor/estimator agents]{%
    \label{fig:EBSE_MultiAgent}  
    \includegraphics[width=0.8\columnwidth]{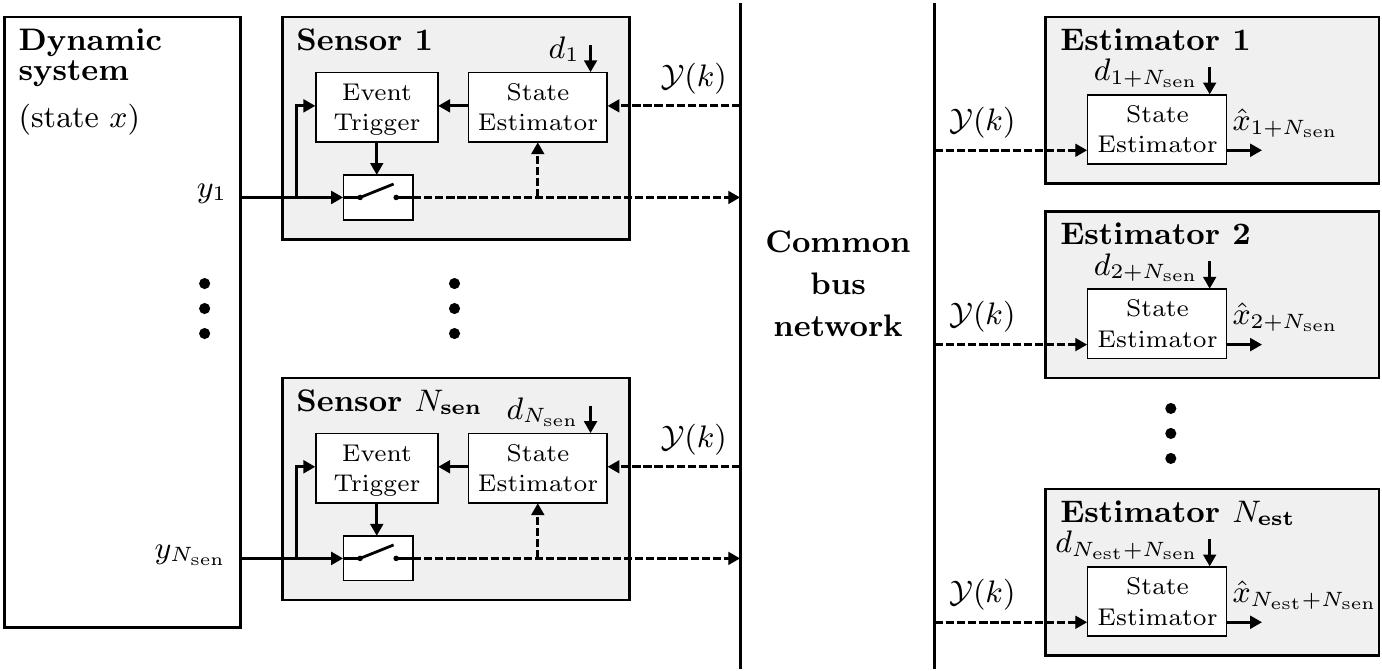}}
\caption{Proposed event-based state estimation architectures.  Dashed arrows indicate event-based communication, while solid ones indicate periodic communication.
\subcaption{(a)}{Single sensor/estimator case: The sensor agent implements a copy of the remote \emph{State Estimator} to trigger a data transmission (\emph{Event Trigger}) whenever an update is needed at the remote agent.}
\subcaption{(b)}{Multiple sensor/estimator case: Each agent implements a copy of the State Estimator for making transmit decisions (Sensors) or having full state information available (Estimators).  The common bus supports data exchange between all agents; $\mathcal{Y}(k)$ denotes the set of measurements communicated at time $k$.  Disturbances $d_i$ model differences in the agents' estimates, \eg from imperfect communication.}
}
\label{fig:EBSE_arch}
\end{figure}

\subsection{State estimator}
\label{sec:StateEstimator}
Both sensor and remote agent implement the state estimator (\cf \fig \ref{fig:EBSE_SingleAgent}).
The estimator recursively computes an estimate $\hat{x}_i(k) = \hat{x}_i(k|k)$ of the system state $x(k)$ from the available measurements:
\begin{align}
\hat{x}_i(k|k-1) &= A  \hat{x}_i(k-1|k-1) + B  u(k-1) \label{eq:EBSE1_single}  \\
\hat{x}_i(k|k) &= \hat{x}_i(k|k-1) + \gamma(k) L  \big(y(k) - C \, \hat{x}_i(k|k-1) \big)   \label{eq:EBSE2_single} 
\end{align}
with $i=1$ for the sensor, $i=2$ for the estimator, $L$ as in \eqref{eq:FCSE2},
and $\gamma(k) \in \{0,1\}$ denoting the sensor's decision of transmitting $y(k)$ ($\gamma(k)=1$), or not ($\gamma(k)=0$).  
%

By \Assump \ref{ass:idealComm}, both estimators have the same input data. 
 If, in addition, they are initialized identically,
 both estimates are identical, \ie $\hat{x}_1(k) = \hat{x}_2(k)$ for all $k$.  Hence, the sensor has knowledge about the estimator and can  exploit this for the triggering decision.

\subsection{Event trigger}
\label{sec:EventTriggerMeas}
The sensor transmits a measurement if, and only if, the remote estimator cannot predict the measurement accurately enough based on its state prediction.  
Specifically, $y(k)$ is transmitted when the remote prediction $\hat{y}(k) = C \hat{x}_2(k|k-1)$ deviates from $y(k)$ by more than a tolerable threshold $\deltae \geq 0$. 
Since $\hat{x}_1(k|k-1) = \hat{x}_2(k|k-1)$, the sensor can make this decision without requiring communication from the remote estimator:
\begin{equation}
\text{transmit $y(k)$} 
\; \Leftrightarrow \;
\text{$\gamma(k) = 1$} 
\; \Leftrightarrow \;
\| y(k) - C \hat{x}_1(k|k-1) \| \geq \deltae .
\label{eq:eventTrigger_MB}
\end{equation}
%
%
%
Tuning $\deltae$ allows the designer to trade off the sensor's frequency of events (and, hence, the communication rate) for estimation performance.
This choice of the trigger will be instrumental in bounding 
the difference between the event-based and the centralized estimator, as will be seen in the subsequent stability analysis.  
The trigger is also called \emph{innovation-based trigger} and was previously proposed in different contexts in \cite{YoTiSo02,TrDAn11,WuJiJoSh13}.
The innovation trigger \eqref{eq:eventTrigger_MB} can also be realized without the local state estimator on the sensor by periodically communicating estimates from the remote estimator to the sensor.  However, the proposed architecture
avoids this additional communication.  

\subsection{Stability analysis}
\label{sec:idealStabAnalysis}
The estimator update equations \eqref{eq:EBSE1_single}, \eqref{eq:EBSE2_single} and the triggering rule (\ref{eq:eventTrigger_MB}) together constitute the proposed event-based state estimator.   The estimator \eqref{eq:EBSE1_single}, \eqref{eq:EBSE2_single} is a switching observer, whose switching modes are governed by the event trigger \eqref{eq:eventTrigger_MB}.  
For arbitrary switching, stability of the switching observer is \emph{not} implied by stability of the centralized design (see \eg \cite{BoLu02}).  Hence, proving stability is an essential, non-trivial requirement for the event-based design.

\subsubsection{Difference to centralized estimator}
Addressing \Pro \ref{pro:EBSE}, we first prove a bounded difference to
the centralized reference estimator $\hat{x}_c(k)$.
Using \eqref{eq:FCSE1}, \eqref{eq:FCSE2}, \eqref{eq:EBSE1_single}, and \eqref{eq:EBSE2_single}, the difference $e_i(k) = \hat{x}_c(k) - \hat{x}_i(k)$ can be written as
\begin{align}
e_i(k) 
&=A e_i(k\!-\!1) 
+ L \big( y(k) - C \hat{x}_\text{c}(k|k\!-\!1) \big) - \gamma(k) L \big( y(k) - C \hat{x}_i(k|k-1) \big) \nonumber  \\
&=(I-LC)A e_i(k-1) 
+ (1-\gamma(k)) L \big( y(k) - C \hat{x}_i(k|k-1) \big) 
\label{eq:epsilon_ci}
\end{align}
where the last equation was obtained by adding and subtracting $L ( y(k) - C \hat{x}_i(k|k\!-\!1) )$.
The error $e_i(k)$ is governed by the stable centralized estimator dynamics $(I-LC)A$ 
with an extra input term, 
which is bounded by the choice of the event-trigger (\ref{eq:eventTrigger_MB}): for $\gamma(k)=0$, $L (y(k) - C \hat{x}_i(k|k-1))$ is bounded by (\ref{eq:eventTrigger_MB}), and for $\gamma(k)=1$, the extra term vanishes.
We thus have the following result:
\begin{theorem}
\label{thm:epsilon_ci_idealized}
Let Assumptions  \ref{ass:knownInput} and \ref{ass:idealComm} be satisfied, $(I-LC)A$ be stable, and $\hat{x}_1(0) = \hat{x}_2(0) = x_0$ for some $x_0 \in \R^n$.  Then, the difference $e_i(k)$ between the centralized estimator and the EBSE \eqref{eq:EBSE1_single}, \eqref{eq:EBSE2_single}, and \eqref{eq:eventTrigger_MB} is bounded by
\begin{equation}
\| e_i \|_\infty \leq m_\text{c} \| e_i(0) \| +  \frac{m_\text{c}}{1-\rho_\text{c}} \|L\| \delta^\text{\normalfont{est}} =: e_i^\text{max}.
\label{eq:thm_epsilon_ci_idealized}
\end{equation}
\end{theorem}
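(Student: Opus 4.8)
The plan is to treat the difference recursion \eqref{eq:epsilon_ci} as a linear time-invariant system driven by a bounded input, and then apply the standard variation-of-constants argument together with the exponential stability bound \eqref{eq:expStabCentral}. First I would unroll the recursion $e_i(k) = (I-LC)A\, e_i(k-1) + (1-\gamma(k))\, L\big(y(k)-C\hat{x}_i(k|k-1)\big)$ from the initial condition $e_i(0)$, writing
\begin{equation}
e_i(k) = \big((I-LC)A\big)^k e_i(0) + \sum_{j=1}^{k} \big((I-LC)A\big)^{k-j}\, (1-\gamma(j))\, L\big(y(j)-C\hat{x}_i(j|j-1)\big).
\label{eq:unrolled}
\end{equation}
Since $\hat{x}_1(0)=\hat{x}_2(0)=x_0$, Assumptions \ref{ass:knownInput} and \ref{ass:idealComm} give $\hat{x}_1(k)=\hat{x}_2(k)$ for all $k$, so the trigger \eqref{eq:eventTrigger_MB} evaluated at the sensor using $\hat{x}_1$ is the same as the one the remote estimator would evaluate using $\hat{x}_2$; this is what lets me control the input term uniformly.

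The key step is bounding the driving term. For each $j$, either $\gamma(j)=1$, in which case $(1-\gamma(j))=0$ and the whole summand vanishes, or $\gamma(j)=0$, in which case the trigger condition \eqref{eq:eventTrigger_MB} was \emph{not} met, so $\|y(j)-C\hat{x}_1(j|j-1)\| = \|y(j)-C\hat{x}_i(j|j-1)\| < \deltae$. Either way, $\|(1-\gamma(j))\,L(y(j)-C\hat{x}_i(j|j-1))\| \leq \|L\|\,\deltae$. Then I would apply the triangle inequality and submultiplicativity of the induced norm to \eqref{eq:unrolled}, use $\|((I-LC)A)^{k-j}\| \leq m_\text{c}\rho_\text{c}^{k-j}$ from \eqref{eq:expStabCentral} on each term (and on the initial-condition term with exponent $k$), yielding
\begin{equation}
\|e_i(k)\| \leq m_\text{c}\rho_\text{c}^k \|e_i(0)\| + m_\text{c}\|L\|\deltae \sum_{j=1}^{k} \rho_\text{c}^{k-j}.
\label{eq:bound}
\end{equation}
Bounding the geometric sum by its infinite-series value $\sum_{\ell \geq 0}\rho_\text{c}^\ell = 1/(1-\rho_\text{c})$ and using $\rho_\text{c}^k \leq 1$ gives $\|e_i(k)\| \leq m_\text{c}\|e_i(0)\| + \frac{m_\text{c}}{1-\rho_\text{c}}\|L\|\deltae$. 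Since this holds for every $k \geq 0$, taking the supremum gives $\|e_i\|_\infty \leq e_i^\text{max}$, which is \eqref{eq:thm_epsilon_ci_idealized}.

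I do not expect a serious obstacle here — the result is essentially bounded-input bounded-state stability for an exponentially stable LTI difference equation, and equation \eqref{eq:epsilon_ci} plus the trigger rule have already done the conceptual work. The only point requiring mild care is making explicit that the input term is genuinely bounded for \emph{every} $k$ regardless of the (arbitrary, data-dependent) switching sequence $\{\gamma(k)\}$, which is exactly where the case distinction on $\gamma(j)$ is used and why the innovation-based trigger \eqref{eq:eventTrigger_MB} — rather than stability of $(I-LC)A$ alone — is what makes the argument go through. One should also note that the $\|e_i(0)\|$ term is retained rather than absorbed, since $e_i(0) = \hat{x}_c(0) - x_0$ need not be zero unless the centralized estimator is also initialized at $x_0$.
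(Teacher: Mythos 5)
Your proof is correct and follows essentially the same route as the paper: establish $\hat{x}_1 = \hat{x}_2$ from the assumptions, bound the driving term $(1-\gamma(k))L(y(k)-C\hat{x}_i(k|k-1))$ by $\|L\|\deltae$ via the case split on $\gamma(k)$, and conclude by bounded-input/bounded-state stability of the exponentially stable error dynamics \eqref{eq:epsilon_ci}. The only difference is that you carry out the variation-of-constants unrolling and geometric-series bound explicitly, whereas the paper delegates that final step to a citation of \cite[p.~218, Thm.~75]{CaDe91}.
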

\begin{proof}
From the assumptions, it follows that $\hat{x}_1(k) = \hat{x}_2(k)$ and $\hat{x}_1(k|k-1) = \hat{x}_2(k|k-1)$.  From the previous argument, we have
\begin{equation}
\| (1-\gamma(k)) L \big( y(k) - C \hat{x}_i(k|k-1) \big) \|
\leq
\| L \| \delta^\text{\normalfont{est}} .
\end{equation}
The bound \eqref{eq:thm_epsilon_ci_idealized} then follows from \cite[p.~218, Thm. 75]{CaDe91} and exponential stability of $e_i(k) = (I-LC)A e_i(k-1)$ (\cf \eqref{eq:expStabCentral}).
\end{proof}

The first term in \eqref{eq:thm_epsilon_ci_idealized}, $m_\text{c} \| e_i(0) \|$, is due to possibly different initial conditions between the EBSE and the centralized estimator, 
and $m_\text{c} \|L\| \deltae / (1-\rho_\text{c})$ represents the asymptotic bound.  Choosing $\deltae$ small enough, $e_i(k)$ can hence be made arbitrarily small as $k \to \infty$,
and, for $\deltae = 0$,
the performance of the centralized estimator is recovered.

The bound (\ref{eq:thm_epsilon_ci_idealized}) holds irrespective of the nature of the disturbances $v$ and $w$ in \eqref{eq:system_x}, \eqref{eq:system_y} (no assumption on $v$, $w$ is made in \The \ref{thm:epsilon_ci_idealized}).  In particular, it also holds for the case of unbounded disturbances such as Gaussian noise.  

\subsubsection{Estimation error}
\label{sec:estErrSingle}
The actual estimation error $\epsilon_i$ of agent $i$ is
\begin{align}
\epsilon_i(k) &:= x(k) - \hat{x}_i(k)
= \epsilon_\text{c}(k) + e_i(k) . 
\label{eq:estError_ei}
\end{align}
\The \ref{thm:epsilon_ci_idealized} can be used to deduce properties of the estimation error $\epsilon_i$ from properties of the centralized estimator.
We exemplify this for the case of bounded, as well as stochastic disturbances $v$ and $w$.
\begin{corollary}
\label{cor:estErrDet}
Let $\|v\|_\infty \leq v^\text{max}$, $\|w\|_\infty \leq w^\text{max}$, $\| e_i \|_\infty \leq e_i^\text{max}$ be bounded, and $(I-LC)A$ be stable.  Then, 
the event-based estimation error \eqref{eq:estError_ei} is bounded by
\begin{equation}
\| \epsilon_i \|_\infty \leq \epsilon_\text{c}^\text{max} + e_i^\text{max}
\label{eq:bound_epsiloni_deterministic}
\end{equation}
with $\epsilon_\text{c}^\text{max}:= m_\text{c} \| \epsilon_\text{c}(0) \| +  \frac{m_\text{c}}{1-\rho_\text{c}} (\|I-LC\| v^\text{\normalfont{max}} + \|L\| w^\text{\normalfont{max}})$.
\end{corollary}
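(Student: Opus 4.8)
The plan is to exploit the additive decomposition $\epsilon_i(k) = \epsilon_\text{c}(k) + e_i(k)$ from \eqref{eq:estError_ei} together with the triangle inequality, which immediately gives $\|\epsilon_i\|_\infty \leq \|\epsilon_\text{c}\|_\infty + \|e_i\|_\infty$. The second term is bounded by $e_i^\text{max}$ by hypothesis, which is exactly the content of \The \ref{thm:epsilon_ci_idealized}, so the only real work is to establish $\|\epsilon_\text{c}\|_\infty \leq \epsilon_\text{c}^\text{max}$, i.e.\ to bound the centralized estimation error under bounded process and measurement disturbances.

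For that, I would start from the centralized error recursion \eqref{eq:closedLoopCentralized_est} and unroll it from the initial condition $\epsilon_\text{c}(0)$: iterating $\epsilon_\text{c}(k) = (I\!-\!LC)A\,\epsilon_\text{c}(k-1) + (I\!-\!LC)v(k-1) - Lw(k)$ yields an expression for $\epsilon_\text{c}(k)$ as $((I\!-\!LC)A)^k \epsilon_\text{c}(0)$ plus a finite convolution sum of powers of $(I\!-\!LC)A$ applied to the forcing terms $(I\!-\!LC)v(j)$ and $Lw(j+1)$. Taking norms and using submultiplicativity, the exponential bound \eqref{eq:expStabCentral}, and $\|v\|_\infty \leq v^\text{max}$, $\|w\|_\infty \leq w^\text{max}$, each term in the convolution sum is dominated by $m_\text{c}\rho_\text{c}^{\,\ell}\big(\|I\!-\!LC\|\,v^\text{max} + \|L\|\,w^\text{max}\big)$, and the geometric series over $\ell \geq 0$ sums to at most $\frac{m_\text{c}}{1-\rho_\text{c}}\big(\|I\!-\!LC\|\,v^\text{max} + \|L\|\,w^\text{max}\big)$. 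Together with the decaying initial-condition contribution $m_\text{c}\rho_\text{c}^k\|\epsilon_\text{c}(0)\| \leq m_\text{c}\|\epsilon_\text{c}(0)\|$, this gives precisely $\epsilon_\text{c}^\text{max}$. Alternatively, and more in the spirit of the proof of \The \ref{thm:epsilon_ci_idealized}, one can invoke \cite[p.~218, Thm.~75]{CaDe91} directly, applied to the stable dynamics $(I\!-\!LC)A$ driven by the uniformly bounded forcing term $(I\!-\!LC)v(k\!-\!1) - Lw(k)$.

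Combining the two bounds then yields \eqref{eq:bound_epsiloni_deterministic}. I do not anticipate a genuine obstacle; the only point requiring mild care is the time-index mismatch between $v(k-1)$ and $w(k)$ in \eqref{eq:closedLoopCentralized_est} when writing out the convolution sum, but since both disturbances are bounded uniformly in $k$ this affects neither the geometric-series argument nor the final constant. For the stochastic variant alluded to in the surrounding text, the same unrolling would instead be combined with $\E[\cdot]$ and Minkowski's inequality applied to the relevant moment of $\epsilon_\text{c}$ and $e_i$, but that is a distinct statement from the deterministic bound asserted in this corollary.
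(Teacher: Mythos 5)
Your proof is correct and follows essentially the same route as the paper: decompose $\epsilon_i = \epsilon_\text{c} + e_i$ via \eqref{eq:estError_ei}, bound $e_i$ by hypothesis, and bound $\epsilon_\text{c}$ from the stable recursion \eqref{eq:closedLoopCentralized_est} with bounded forcing using \eqref{eq:expStabCentral} and \cite[p.~218, Thm.~75]{CaDe91}. The explicit convolution-sum unrolling you give is simply the content of that cited theorem, so there is no substantive difference.
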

\begin{proof}
The bound $\epsilon_\text{c}^\text{max}$ on the centralized estimation error $\epsilon_\text{c}(k)$ follows directly from \eqref{eq:closedLoopCentralized_est}, exponential stability \eqref{eq:expStabCentral}, and \cite[p.~218, Thm. 75]{CaDe91}.  The result \eqref{eq:bound_epsiloni_deterministic} is then immediate from \eqref{eq:estError_ei}.
\end{proof}
\begin{corollary}
\label{cor:estErrStoch}
Let $v$, $w$, $x(0)$ be random variables with $\E[v(k)] = 0$, $\E[w(k)] = 0$, $\E[x(0)] = x_0$, and the centralized estimator be initialized with $\hat{x}_\text{c}(0) = x_0$.  Let $\| e_i \|_\infty \leq e_i^\text{max}$ be bounded, and $(I-LC)A$ be stable.  
Then, the expected event-based estimation error \eqref{eq:estError_ei} is bounded by
\begin{equation}
\| \E[\epsilon_i(k)] \| \leq e_i^\text{max} \quad \forall k.
\label{eq:bound_epsiloni_stochastic}
\end{equation}
\end{corollary}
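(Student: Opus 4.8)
The plan is to use the additive decomposition $\epsilon_i(k) = \epsilon_\text{c}(k) + e_i(k)$ established in \eqref{eq:estError_ei}, take expectations of the two terms separately, and show that the centralized error contributes nothing in expectation while the difference term $e_i(k)$ is controlled \emph{pathwise} by \The \ref{thm:epsilon_ci_idealized}.

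First I would analyse $\E[\epsilon_\text{c}(k)]$. Taking expectations in the centralized error recursion \eqref{eq:closedLoopCentralized_est} and using linearity of $\E[\cdot]$ together with $\E[v(k)] = 0$ and $\E[w(k)] = 0$ gives $\E[\epsilon_\text{c}(k)] = (I-LC)A\,\E[\epsilon_\text{c}(k-1)]$. Since the centralized estimator is initialized deterministically with $\hat{x}_\text{c}(0) = x_0$ and $\E[x(0)] = x_0$, the initial error is zero-mean: $\E[\epsilon_\text{c}(0)] = \E[x(0)] - x_0 = 0$. Iterating, $\E[\epsilon_\text{c}(k)] = ((I-LC)A)^k \E[\epsilon_\text{c}(0)] = 0$ for every $k$.

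Next I would bound $\E[e_i(k)]$. Here $e_i(k)$ is itself a random variable, since the triggering sequence $\{\gamma(k)\}$ appearing in \eqref{eq:epsilon_ci} depends on the measurements and hence on $v$ and $w$. However, \The \ref{thm:epsilon_ci_idealized} --- whose bound, as remarked there, holds irrespective of the nature of $v$ and $w$, including unbounded stochastic disturbances --- yields $\|e_i(k)\| \le \|e_i\|_\infty \le e_i^\text{max}$ along every sample path, so $e_i(k)$ is integrable and $\|\E[e_i(k)]\| \le \E[\|e_i(k)\|] \le e_i^\text{max}$. Combining with the previous step, $\E[\epsilon_i(k)] = \E[\epsilon_\text{c}(k)] + \E[e_i(k)] = \E[e_i(k)]$, and therefore $\|\E[\epsilon_i(k)]\| \le e_i^\text{max}$, which is \eqref{eq:bound_epsiloni_stochastic}.

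This argument is short and I do not expect a real obstacle; the only points deserving attention are (i) that $\hat{x}_\text{c}(0)$ must be set to the mean $x_0$ for the centralized error to be unbiased --- which is exactly the hypothesis invoked --- and (ii) that $e_i(k)$ should be treated as random, with the \emph{almost-sure} bound from \The \ref{thm:epsilon_ci_idealized} (rather than a moment bound) being what the proof needs, together with $\|\E[e_i(k)]\| \le \E[\|e_i(k)\|]$. Unlike \Cor \ref{cor:estErrDet}, no boundedness of $v$ and $w$ is required here, only zero mean, and the conclusion controls $\|\E[\epsilon_i(k)]\|$ rather than $\E[\|\epsilon_i(k)\|]$.
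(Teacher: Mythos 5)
Your proof is correct and follows essentially the same route as the paper: show $\E[\epsilon_\text{c}(k)]=0$ by iterating the centralized error recursion from $\E[\epsilon_\text{c}(0)]=\E[x(0)]-x_0=0$, then bound $\|\E[e_i(k)]\|\le\E[\|e_i(k)\|]\le e_i^\text{max}$ via Jensen's inequality and the pathwise bound on $e_i$. The only cosmetic difference is that you invoke \The \ref{thm:epsilon_ci_idealized} to justify $\|e_i(k)\|\le e_i^\text{max}$, whereas the corollary simply takes this bound as a hypothesis.
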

\begin{proof}
From \eqref{eq:closedLoopCentralized_est}, it follows $\E[ \epsilon_\text{c}(k) ] = (I-LC)A \E[ \epsilon_\text{c}(k-1) ]$, and thus $\E[ \epsilon_\text{c}(k) ] = 0$ by recursion from $\E[ \epsilon_\text{c}(0) ] = \E[x(0)] - x_0 = 0$.  Therefore,
\begin{align}
\| \E[\epsilon_i(k)] \| 
&= \| \E[e_i(k)] \| 
\leq \E[ \|e_i(k) \| ]
\leq e_i^\text{max}
\end{align}
where the first inequality follows from Jensen's inequality, and the last from $\| e_i(k) \| \leq e_i^\text{max}$.
\end{proof}

\section{Event-based state estimation with multiple agents}
\label{sec:multiAgent}
We extend the ideas of the previous section to the general multi-agent case in \Pro \ref{pro:EBSE}.
%
While the assumption of perfect communication (\Assump \ref{ass:idealComm}) may possibly be realizable for few agents, it becomes unrealistic as the number of agents increases.  Thus, we generalize the stability analysis to the case where agents' estimates can differ.

\subsection{Architecture}
We propose the distributed event-based architecture depicted in \fig \ref{fig:EBSE_MultiAgent} for the multi-agent problem.
Adopting the key idea of the single agent case (\cf \fig \ref{fig:EBSE_SingleAgent}), each agent implements a copy of the state estimator for making transmit decisions.  
The common bus network ensures that, if a measurement is transmitted, it is broadcast to all other units.  For the  estimators to be consistent, the sensor agents also listen to the measurement data $\mathcal{Y}(k)$ broadcast by other units.  
%

The proposed EBSE scheme is \emph{distributed} in the sense that data from distributed sensors is required for stable state estimation, and that transmit decisions are made locally by each agent.

\subsection{Event trigger}
\label{sec:EventTriggerMeas_Multi}
In analogy to the single agent case \eqref{eq:eventTrigger_MB}, each agent $i$, $i \in \N_{\Nsens}$, uses the following event triggering rule:
\begin{equation}
\text{transmit $y_i(k)$} 
\; \Leftrightarrow \;
\| y_i(k) - C_i \hat{x}_i(k|k-1) \| \geq \deltae_i .
\label{eq:eventTrigger_MB_multi}
\end{equation}
The prediction $\hat{y}_i(k) = C_i \hat{x}_i(k|k-1)$ computed by agent $i$ is representative also for all other agents' predictions  of the same measurement, $\hat{y}_{i}^j(k) = C_i \hat{x}_j(k|k-1)$, as long as $\hat{x}_i(k|k-1) \approx \hat{x}_j(k|k-1)$, which is to be established in the stability analysis below.  
Being able to approximately represent the other agents' knowledge
is the basis for making effective transmit decisions in the proposed approach.

For later reference, we introduce $\deltae := (\deltae_1, \dots, \deltae_{\Nsens})$ and the index sets of transmitting and not-transmitting agents:
\begin{align}
I(k) &:= \{ i \in \N_{\Nsens} \, | \, \| y_i(k) - C_i \hat{x}_i(k|k\!-\!1) \| \geq \deltae_i \}
\label{eq:I} \\
\Ic(k) &:= \{ i \in \N_{\Nsens} \, | \, \| y_i(k) - C_i \hat{x}_i(k|k\!-\!1) \| < \deltae_i \}  = \N_{\Nsens} \setminus I(k).
\label{eq:Ibar}
\end{align}

\subsection{State estimator}
Extending the event-based estimator \eqref{eq:EBSE1_single}, \eqref{eq:EBSE2_single} to the multi sensor case, we propose the following estimator update for all agents ($i \in \mathbb{N}_\Nall$):
\begin{align}
\hat{x}_i(k|k-1) &= A  \hat{x}_i(k-1|k-1) + B  u(k-1) \label{eq:EBSE1_multi}  \\
\hat{x}_i(k|k) &= \hat{x}_i(k|k-1) + \!\! \sum_{\ell \in I(k)} \!\! L_\ell \big( y_\ell(k) - C_\ell \hat{x}_i(k|k-1) \big)   \label{eq:EBSE2ideal_multi} 
\end{align}
where $L_\ell \in \R^{n \times p_\ell}$ is the submatrix of the centralized gain  $L = [L_1, L_2, \dots, L_{\Nsens}]$ corresponding to $y_\ell$.
Rewriting \eqref{eq:FCSE2} as 
\begin{align}
\hat{x}_\text{c}(k|k) &= \hat{x}_\text{c}(k|k-1) + \!\! \sum_{\ell\in \N_{\Nsens}} \!  L_\ell \big( y_\ell(k) - C_\ell \hat{x}_\text{c}(k|k-1) \big) 
\label{eq:FCSE2_rewritten} 
\end{align}
we see that \eqref{eq:EBSE2ideal_multi} is the same as the centralized update, but only updating with a subset $I(k) \subset \N_{\Nsens}$ of all measurements.
If, at time $k$, no measurement is transmitted (\ie $I(k) = \emptyset$), then the summation in (\ref{eq:EBSE2ideal_multi}) 
vanishes; 
that is, $\hat{x}_i(k|k) = \hat{x}_i(k|k-1)$. 

To account for differences in any two agents' estimates, \eg from unequal initialization, different computation accuracy, or imperfect communication, 
we introduce a generic disturbance signal $d_i$ acting on each estimator (\cf \fig \ref{fig:EBSE_MultiAgent}).  For the stability analysis, we thus replace \eqref{eq:EBSE2ideal_multi} with
\begin{align}
\hat{x}_i(k|k) &= \hat{x}_i(k|k\!-\!1) \! + \! \sum\limits_{\ell \in I(k)} \!\! L_\ell \big( y_\ell(k) - C_\ell \hat{x}_i(k|k\!-\!1) \big) +d_i(k).  \label{eq:EBSE2_multi}
\end{align}
%
The disturbances are assumed to be bounded:
\begin{assum}
\label{ass:bounded_di}
For all $i \in \N_{\Nall}$, $\|d_i \|_\infty \leq d_i^\text{\normalfont{max}}$.
\end{assum}
\noindent
This assumption is realistic, when $d_i$ represent imperfect initialization or different computation accuracy, for example.
Even though the assumption may not hold for modeling packet drops in general,
the developed method was found to be effective also for this case in the example of \sect \ref{sec:simulationExample}. 
%

\subsection{Stability analysis}
We discuss stability of the distributed EBSE system given by the process \eqref{eq:system_x}, \eqref{eq:system_yi}, the (disturbed) estimators \eqref{eq:EBSE1_multi}, \eqref{eq:EBSE2_multi}, and the triggering rule \eqref{eq:eventTrigger_MB_multi}.
We first consider the difference between the centralized and event-based estimate, $e_i(k) = \hat{x}_c(k) - \hat{x}_i(k)$.  
By straightforward manipulation using \eqref{eq:FCSE1}, \eqref{eq:EBSE1_multi}, \eqref{eq:FCSE2_rewritten},  and \eqref{eq:EBSE2_multi}, we obtain
\begin{align}
e_i(k) 
&=A e_i(k\!-\!1) 
+ \sum\limits_{\ell \in \N_{\Nsens}} L_\ell \big( y_\ell(k) - C_\ell \hat{x}_\text{c}(k|k\!-\!1) \big) 
-\!\!\underbrace{\sum\limits_{\ell \in I(k)}  L_\ell \big( y_\ell(k) - C_\ell \hat{x}_i(k|k-1) \big)}_{
\sum_{\ell \in \N_{\Nsens}} \! L_\ell (\, \ldots \,) \,\,
- \,\, \sum_{\ell \in \Ic(k)} \! L_\ell (\, \ldots \,)
} -d_i(k) \nonumber  \\
&=(I-LC)A e_i(k-1) +\!\! \sum\limits_{\ell \in \Ic(k)}  L_\ell \big( y_\ell(k) - C_\ell \hat{x}_i(k|k-1) \big)  -d_i(k)
\label{eq:epsilon_ci_multi} \\
%
&=(I-LC)A e_i(k-1) \nonumber \\
&\phantom{=}+\!\! \sum\limits_{\ell \in \Ic(k)}  L_\ell \big( y_\ell(k) - C_\ell \hat{x}_\ell(k|k-1) \big) -d_i(k) -\!\! \sum\limits_{j \in \Ic(k)} L_j C_j A e_{ij}(k-1) 
\label{eq:epsilon_ci_di}
\end{align}
where $e_{ij}(k) := \hat{x}_i(k) - \hat{x}_j(k)$ is the inter-agent error, 
and we used $\hat{x}_i(k|k-1) - \hat{x}_\ell(k|k-1) = A e_{i \ell}(k-1)$.
The error dynamics \eqref{eq:epsilon_ci_di} are governed by stable dynamics $e_i(k) =(I-LC)A e_i(k-1)$ with three input terms.
The term $\sum_{\ell \in \Ic(k)}  L_\ell ( y_\ell(k) - C_\ell \hat{x}_\ell(k|k-1) )$ is analogous to the last term in \eqref{eq:epsilon_ci} and bounded by the event triggering \eqref{eq:eventTrigger_MB_multi} (\cf \eqref{eq:Ibar}).  The last two terms are due to the disturbance $d_i$ and resulting inter-agent differences $e_{ij}$. 
To bound $e_i$, $e_{ij}$ must also be bounded, which is established next. 

\subsubsection{Inter-agent error}
\label{sec:interAgentError}
The inter-agent error can be written as
\begin{align}
e_{ij}(k) 
&= \hat{x}_i(k) - \hat{x}_j(k) 
=A e_{ij}(k-1) 
\nonumber \\
&\phantom{=} + \sum\nolimits_{\ell \in I(k)} L_\ell \big( y_\ell(k) - C_\ell \hat{x}_i(k|k-1) \big) +d_i(k) \nonumber \\
&\phantom{=} - \sum\nolimits_{\ell \in I(k)} L_\ell \big( y_\ell(k) - C_\ell \hat{x}_j(k|k-1) \big) -d_j(k) \nonumber \\
&= 
\tilde{A}_{I(k)} \, e_{ij}(k-1)
+d_i(k) -d_j(k) \label{eq:epsij_dyn} 
\end{align}
where $\tilde{A}_{J}$ is defined for some subset $J \subseteq \N_{\Nsens}$ by
\begin{equation}
\tilde{A}_{J} := (I - \sum_{\ell \in J} L_\ell  C_\ell)A .
\end{equation}
Hence, the inter-agent error $e_{ij}(k)$ is governed by the time-varying dynamics $e_{ij}(k) = \tilde{A}_{I(k)} e_{ij}(k-1)$. Unfortunately, one cannot, in general, infer stability of the inter-agent error (and thus the event-based estimation error \eqref{eq:epsilon_ci_di}) from stability of the centralized design. 
A counterexample is presented in \cite{Tr14}.

A sufficiency result for stability of the inter-agent error can be obtained by considering the dynamics \eqref{eq:epsij_dyn} under arbitrary switching; that is, with $\tilde{A}_{J}$ for all subsets $J \subseteq \N_{\Nsens}$.  The following result is adapted from \cite[Lemma~3.1]{MuTr15}.
\begin{lemma}
Let \Assump \ref{ass:bounded_di} hold, and let the matrix inequality 
\begin{equation}
\tilde{A}_{J}^\transpose P \tilde{A}_{J} - P < 0
\label{eq:LMI_cond}
\end{equation}
be satisfied for some positive definite $P \in \R^{n \times n}$ and for all subsets $J \subseteq \N_{\Nsens}$.
Then, for given initial errors $e_{ij}(0)$ ($i,j \in \N_N$), there exists $e^\text{max} \in \R$, $e^\text{max} \geq 0$, such that 
\begin{equation}
\| e_{ij} \|_\infty \leq e^\text{max}, \quad \text{for all $i,j \in \N_N$ and the Euclidean norm $\| . \|$.} 
\label{eq:bounded_eij}
\end{equation}
\label{lem:bounded_eij}
\end{lemma}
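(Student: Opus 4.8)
The plan is to use a common quadratic Lyapunov function $V(e)=e^\transpose P e$ together with the fact that the switching signal $I(k)$ takes values in the \emph{finite} family of subsets of $\N_{\Nsens}$, so that the strict matrix inequalities \eqref{eq:LMI_cond} yield a \emph{uniform} contraction rate. First I would show that there is a scalar $\lambda\in[0,1)$ with $\tilde{A}_{J}^\transpose P \tilde{A}_{J} \preceq \lambda P$ for every $J\subseteq\N_{\Nsens}$: since $P\succ 0$, the inequality \eqref{eq:LMI_cond} is equivalent to $\lambda_{\max}\!\big(P^{-1/2}\tilde{A}_{J}^\transpose P \tilde{A}_{J} P^{-1/2}\big)<1$, and taking the maximum of this quantity over the at most $2^{\Nsens}$ subsets $J$ (a finite set) gives such a $\lambda$.

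Next, introduce the $P$-weighted norm $\|e\|_P:=\sqrt{e^\transpose P e}$, which is equivalent to the Euclidean norm, $\underline{\sigma}\,\|e\|\le\|e\|_P\le\overline{\sigma}\,\|e\|$ with $\underline{\sigma}=\sqrt{\lambda_{\min}(P)}$ and $\overline{\sigma}=\sqrt{\lambda_{\max}(P)}$. The bound $\tilde{A}_{J}^\transpose P \tilde{A}_{J}\preceq\lambda P$ is exactly $\|\tilde{A}_{J}e\|_P\le\sqrt{\lambda}\,\|e\|_P$ for all $e$ and all $J\subseteq\N_{\Nsens}$. Applying this and the triangle inequality for $\|\cdot\|_P$ to the recursion \eqref{eq:epsij_dyn}, and using \Assump \ref{ass:bounded_di} together with $\|d_i(k)-d_j(k)\|\le d_i^\text{max}+d_j^\text{max}$, gives
\[
\|e_{ij}(k)\|_P \;\le\; \sqrt{\lambda}\,\|e_{ij}(k-1)\|_P \;+\; \overline{\sigma}\,\big(d_i^\text{max}+d_j^\text{max}\big) .
\]
Unrolling this scalar geometric recursion and using $\lambda^{k/2}\le 1$ and $\sum_{m\ge 0}\lambda^{m/2}=1/(1-\sqrt{\lambda})$ yields, for all $k\ge 0$,
\[
\|e_{ij}(k)\|_P \;\le\; \|e_{ij}(0)\|_P \;+\; \frac{\overline{\sigma}\,\big(d_i^\text{max}+d_j^\text{max}\big)}{1-\sqrt{\lambda}} .
\]
Dividing by $\underline{\sigma}$ converts the left-hand side back to the Euclidean norm, and taking the maximum of the resulting bound over the finitely many pairs $(i,j)\in\N_N\times\N_N$ produces a single constant $e^\text{max}\ge 0$, independent of $k$, which is \eqref{eq:bounded_eij}.

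The only non-routine step is the passage from the pointwise strict inequalities \eqref{eq:LMI_cond} to a single uniform decay factor $\sqrt{\lambda}<1$; here the finiteness of the subset family $\{J:J\subseteq\N_{\Nsens}\}$ is essential, since a continuum of modes would not in general admit a uniform rate even with a common $P$. Once a common Lyapunov function with uniform contraction is in hand, the remainder is the standard input-to-state-stability estimate for a switched linear system driven by a bounded additive disturbance. I would also note that $I(k)$ is state-dependent through the event triggers \eqref{eq:eventTrigger_MB_multi}, but this plays no role: the bound is derived for \emph{arbitrary} switching sequences valued in the subset family, hence holds in particular for the event-generated one.
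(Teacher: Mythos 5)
Your proof is correct and follows essentially the same route as the paper: the paper simply invokes the input-to-state stability argument of \cite[Lemma~3.1]{MuTr15} under the common Lyapunov condition \eqref{eq:LMI_cond}, combines it with \Assump \ref{ass:bounded_di}, and maximizes over the finitely many pairs $(i,j)$. You have merely made that ISS step explicit --- extracting a uniform contraction factor $\sqrt{\lambda}<1$ from the finite family of strict LMIs and unrolling the recursion in the $P$-weighted norm --- which is a sound and complete rendering of the cited argument.
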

\begin{proof}
Under \eqref{eq:LMI_cond}, the error dynamics \eqref{eq:epsij_dyn} are input-to-state stable (ISS) following the proof of \cite[Lemma~3.1]{MuTr15} ($A_\text{cl}(J)$ replaced with $\tilde{A}_{J}$).
With \Assump \ref{ass:bounded_di}, ISS guarantees boundedness of the inter-agent error $e_{ij}$ and thus the existence of $e^\text{max}_{ij} \geq 0$ (possibly dependent on the initial error $e_{ij}(0)$) such that 
\begin{equation}
\| e_{ij} \|_\infty \leq e^\text{max}_{ij} .
\end{equation}
Finally, \eqref{eq:bounded_eij} is obtained by taking the maximum over all $e^\text{max}_{ij}$.
\end{proof}

%


The stability test is conservative because the event trigger \eqref{eq:eventTrigger_MB_multi} will generally not permit arbitrary switching.  Since $J \subseteq \N_{\Nsens}$ also includes the empty set (\ie $\tilde{A}_{\emptyset} = A$), the test can only be used for open-loop stable dynamics \eqref{eq:system_x}.
In \sect \ref{sec:syncAvg}, we present an alternative approach to obtained bounded $e_{ij}$ for arbitrary systems.

\subsubsection{Difference to centralized estimator}
With the preceding lemma, we can now establish boundedness of the estimation error \eqref{eq:epsilon_ci_di}.
\begin{theorem}
\label{thm:epsilon_ci_notIdealized}
Let Assumptions \ref{ass:knownInput} and \ref{ass:bounded_di} and the conditions of \Lem \ref{lem:bounded_eij} be satisfied,
and let $(I-LC)A$ be stable.
Then, the difference $e_i(k)$ between the centralized estimator and the EBSE \eqref{eq:eventTrigger_MB_multi}, \eqref{eq:EBSE1_multi}, \eqref{eq:EBSE2_multi}  is bounded by
\begin{equation}
\| e_i \|_\infty \leq m_\text{c} \| e_i(0) \| 
+  \frac{m_\text{c}}{1-\rho_\text{c}} 
\big( 
\|L\| \|\delta^\text{\normalfont{est}} \|
+ d_i^\text{\normalfont{max}}
+  \bar{m} \Nsens e^\text{max}
\big)
=: e_i^{\text{\normalfont{max}}} 
\label{eq:thm_epsilon_ci_notIdealized}
\end{equation}
with $m_\text{c}$, $\rho_\text{c}$ as in \eqref{eq:expStabCentral}, and $\bar{m} := \max_{j \in \N_{\Nsens}} \| L_jC_jA \|$.
\end{theorem}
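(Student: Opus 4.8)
The plan is to start from the error recursion \eqref{eq:epsilon_ci_di}, which already writes $e_i(k)$ as the output of the exponentially stable linear system $e_i(k) = (I-LC)A\,e_i(k-1)$ driven by three additive input terms, and to bound each of these terms uniformly in $k$. For the first term, $\sum_{\ell \in \Ic(k)} L_\ell\big( y_\ell(k) - C_\ell \hat{x}_\ell(k|k-1)\big)$, I would stack the summands into a vector $\xi(k)$ whose $\ell$-th block equals $y_\ell(k) - C_\ell \hat{x}_\ell(k|k-1)$ for $\ell \in \Ic(k)$ and is zero otherwise, so that the term equals $L\,\xi(k)$; by the definition of $\Ic(k)$ in \eqref{eq:Ibar}, each nonzero block of $\xi(k)$ has norm strictly less than $\deltae_\ell$, so monotonicity of the H\"older norm with respect to the block magnitudes gives $\|\xi(k)\| \le \|\deltae\|$ and hence $\|L\xi(k)\| \le \|L\|\,\|\deltae\|$. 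The second term, $-d_i(k)$, is bounded by $d_i^\text{max}$ by \Assump \ref{ass:bounded_di}. For the third term, $-\sum_{j \in \Ic(k)} L_j C_j A\, e_{ij}(k-1)$, I would invoke \Lem \ref{lem:bounded_eij} to get $\|e_{ij}(k-1)\| \le e^\text{max}$ for every $j$, bound each summand by $\bar m\, e^\text{max}$ with $\bar m := \max_{j \in \N_{\Nsens}} \|L_j C_j A\|$, and use $|\Ic(k)| \le \Nsens$ to bound the whole sum by $\bar m \,\Nsens\, e^\text{max}$. Adding the three pieces yields a $k$-independent bound $b := \|L\|\,\|\deltae\| + d_i^\text{max} + \bar m\,\Nsens\, e^\text{max}$ on the exogenous input to the $e_i$-dynamics.

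I would then apply the standard discrete-time convolution (BIBO) estimate: unrolling the recursion as $e_i(k) = ((I-LC)A)^k e_i(0) + \sum_{j=1}^{k} ((I-LC)A)^{k-j}(\,\text{input at step } j\,)$, taking norms, and using the exponential bound \eqref{eq:expStabCentral} together with $\sum_{j \ge 0}\rho_\text{c}^{\,j} = 1/(1-\rho_\text{c})$ — this is precisely \cite[p.~218, Thm.~75]{CaDe91} — gives $\|e_i\|_\infty \le m_\text{c}\|e_i(0)\| + \frac{m_\text{c}}{1-\rho_\text{c}}\, b$, which is exactly \eqref{eq:thm_epsilon_ci_notIdealized}.

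The main obstacle is not the linear-systems estimate, which is routine, but the norm bookkeeping across the different objects: \Lem \ref{lem:bounded_eij} is stated for the Euclidean norm whereas the theorem is phrased for a generic H\"older norm, so one must either fix the Euclidean norm throughout the argument or absorb a norm-equivalence constant into $e^\text{max}$; similarly, reducing the first input term to the clean form $\|L\|\,\|\deltae\|$ rests on the block-monotonicity of the chosen norm, which should be made explicit. Beyond that, the proof is a direct assembly of \eqref{eq:epsilon_ci_di}, \Assump \ref{ass:bounded_di}, \Lem \ref{lem:bounded_eij}, and the exponential stability \eqref{eq:expStabCentral}.
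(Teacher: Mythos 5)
Your proposal is correct and follows essentially the same route as the paper's proof: bound the three input terms in \eqref{eq:epsilon_ci_di} by $\|L\|\,\|\deltae\|$, $d_i^\text{max}$, and $\bar m \Nsens e^\text{max}$ respectively, then apply the BIBO estimate from \cite[p.~218, Thm.~75]{CaDe91} using \eqref{eq:expStabCentral}. Your treatment is in fact slightly more explicit than the paper's on the two points it glosses over — the block-stacking argument behind $\|L\xi(k)\|\le\|L\|\,\|\deltae\|$ and the norm consistency with \Lem \ref{lem:bounded_eij}, which is stated for the Euclidean norm.
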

\begin{proof}
We can establish the following bounds (for all $k$)
\begin{align}
&\Big\| \sum\limits_{\ell \in \Ic(k)}  L_\ell \big( y_\ell(k) - C_\ell \hat{x}_\ell(k|k-1) \big) \Big\| 
\stackrel{\text{\eqref{eq:Ibar}}}{\leq}  \|L\| \|\delta^\text{\normalfont{est}} \| \\
&\| d_i(k) \| 
\stackrel{\text{Ass.~\ref{ass:bounded_di}}}{\leq} d_i^\text{max} \\
&\Big\| \sum\limits_{j \in \Ic(k)} L_j C_j A e_{ij}(k-1) \Big\|
\leq \sum\limits_{j \in \Ic(k)} \|L_j C_j A\| \|e_{ij}(k-1)\| 
\stackrel{\text{Lemma~\ref{lem:bounded_eij}}}{\leq} \bar{m} \Nsens e^\text{max} . 
\end{align}
The result \eqref{eq:thm_epsilon_ci_notIdealized} then follows from \eqref{eq:epsilon_ci_di}, stability of $(I-LC)A$, and \cite[p.~218, Thm. 75]{CaDe91}.
\end{proof}

\subsubsection{Synchronous estimator resets}
\label{sec:syncAvg}
We present a straightforward extension of the event-based communication scheme, which guarantees stability even if the inter-agent error dynamics \eqref{eq:epsij_dyn} cannot be shown to be stable (e.g., if \Lem \ref{lem:bounded_eij} does not apply).

Since the inter-agent error $e_{ij}(k)$ is the difference between the state estimates by agent $i$ and $j$, 
we can make it zero 
by resetting the two agents' state estimates to the same value, for example, their average. 
Therefore, a straightforward way to guarantee 
bounded inter-agent errors is to periodically reset all agents' estimates to their joint average.
Clearly, this strategy increases the communication load on the network.  If, however, the disturbances $d_i$ are small or only occur rarely, the required resetting period can typically be large relative to the underlying sampling time $\Ts$.  

We assume that the resetting happens after all agents have made their estimator updates \eqref{eq:EBSE2_multi}.
Let $\hat{x}_i(k-)$ and $\hat{x}_i(k+)$ denote agent $i$'s estimate at time $k$ before and after resetting, and let $K \in \N$ be the fixed resetting period.  Each agent $i$ implements the following synchronous averaging: 
\begin{align}
&\text{If $k$ a multiple of $K$:} 
\!\!\!&& \text{transmit $\hat{x}_i(k-)$;} \label{eq:syncAvg} \\
& && \text{receive $\hat{x}_j(k-), j \in \N_N \!\setminus \!\{i \}$;} \nonumber \\
& && \text{set $\hat{x}_i(k+) = \sum_{j=1}^{\Nall} \hat{x}_j(k-)$.} \nonumber
\end{align}
%
We assume that the network capacity is such that the mutual exchange of the estimates can happen in one time step,
and no data is lost in the transfer.
%
In other scenarios, one could take several time steps to exchange all estimates, at the expense of a delayed reset. 
%
The synchronous averaging period $K$ can be chosen from simulations assuming a model for the inter-agent disturbances $d_i$ (\eg packet drops).  


We have the following stability result for EBSE with synchronous averaging \eqref{eq:syncAvg}.
\begin{theorem}
\label{thm:epsilon_ci_syncAvg}
Let Assumptions \ref{ass:knownInput} and \ref{ass:bounded_di} be satisfied and $(I-LC)A$ be stable. Then, the difference $e_i(k)$ between the centralized estimator and the EBSE with synchronous averaging given by \eqref{eq:eventTrigger_MB_multi}, \eqref{eq:EBSE1_multi}, \eqref{eq:EBSE2_multi},  and \eqref{eq:syncAvg}  is bounded.
\end{theorem}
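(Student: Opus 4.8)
The plan is to split each error $e_i(k) = \hat{x}_\text{c}(k) - \hat{x}_i(k)$ into the across-agent average $\bar e(k) := \frac{1}{\Nall}\sum_{i=1}^{\Nall} e_i(k)$ and the deviation from it, and to bound the two parts by different mechanisms. The point is that the periodic averaging \eqref{eq:syncAvg} acts very differently on these two components: it is invisible to $\bar e$, whereas it forces every inter-agent error $e_{ij}$ back to zero, so that the (possibly unstable) switched part of the dynamics never accumulates over more than $K$ steps. Only Assumptions \ref{ass:knownInput} and \ref{ass:bounded_di} and stability of $(I-LC)A$ are used; in particular the conditions of \Lem \ref{lem:bounded_eij} are not needed.

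\emph{Step 1 (inter-agent error is bounded).} Between two consecutive reset instants $mK$ and $(m{+}1)K$, the inter-agent error obeys \eqref{eq:epsij_dyn}, $e_{ij}(k) = \tilde{A}_{I(k)} e_{ij}(k-1) + d_i(k) - d_j(k)$, while the reset \eqref{eq:syncAvg} makes all agents share the same estimate, hence sets $e_{ij}$ to zero, right after $mK$. Thus $e_{ij}(k)$ is always reached from a zero initial value in at most $K$ steps of \eqref{eq:epsij_dyn} driven by $d_i - d_j$, which by \Assump \ref{ass:bounded_di} has norm at most $2\max_{i} d_i^\text{max}$. Summing a finite series with ratio $\alpha := \max_{J\subseteq\N_{\Nsens}}\|\tilde{A}_{J}\|$ yields a uniform bound $\|e_{ij}\|_\infty \leq e^\text{max} := 2\max_i d_i^\text{max}\sum_{t=0}^{K-1}\alpha^{t}$ for all $i,j\in\N_{\Nall}$ and all $k$ — with no stability requirement on the matrices $\tilde{A}_{J}$.

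\emph{Step 2 (average error is bounded).} Averaging \eqref{eq:epsilon_ci_di} over $i$, the common matrix $(I-LC)A$ factors out and the forcing becomes $\bar r(k) = \sum_{\ell\in\Ic(k)} L_\ell\big(y_\ell(k) - C_\ell\hat{x}_\ell(k|k-1)\big) - \frac{1}{\Nall}\sum_i d_i(k) - \frac{1}{\Nall}\sum_i\sum_{j\in\Ic(k)} L_j C_j A\, e_{ij}(k-1)$, which is bounded: the first term by $\|L\|\,\|\deltae\|$ via \eqref{eq:eventTrigger_MB_multi} (exactly as in the proof of \The \ref{thm:epsilon_ci_notIdealized}), the second by $\max_i d_i^\text{max}$ via \Assump \ref{ass:bounded_di}, and the third by $\bar m\,\Nsens\,e^\text{max}$ with $\bar m := \max_j \|L_j C_j A\|$ and $e^\text{max}$ from Step 1. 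The crucial observation is that the reset leaves $\bar e$ unchanged: \eqref{eq:syncAvg} sets $e_i(mK{+}) = \frac{1}{\Nall}\sum_j e_j(mK{-})$ for \emph{every} $i$, so averaging over $i$ returns $\bar e(mK{-})$. Hence $\bar e(k) = (I-LC)A\,\bar e(k-1) + \bar r(k)$ holds at every step $k$ with bounded $\bar r$, and exponential stability \eqref{eq:expStabCentral} together with \cite[p.~218, Thm.~75]{CaDe91} yields $\|\bar e\|_\infty < \infty$.

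\emph{Step 3 and main obstacle.} From $e_i(k) - \bar e(k) = \frac{1}{\Nall}\sum_j\big(\hat{x}_j(k) - \hat{x}_i(k)\big) = -\frac{1}{\Nall}\sum_{j} e_{ij}(k)$ one obtains $\|e_i(k)\| \leq \|\bar e\|_\infty + e^\text{max}$ for all $k$, which is the claimed boundedness. The only real difficulty — and the reason this decomposition is needed rather than a verbatim reuse of \The \ref{thm:epsilon_ci_notIdealized} — is that over a window of $K$ steps the switched map $\tilde{A}_{I(k)}$, and hence an individual $e_i$, may grow transiently, so a per-period contraction estimate (and therefore a bound valid for \emph{every} $K$) is not available; the resolution is that this potentially growing part lives entirely in the inter-agent differences, which the reset annihilates, while the part that genuinely inherits the stable centralized dynamics is the agent average, which the reset preserves.
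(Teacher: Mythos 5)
Your proof is correct and follows essentially the same route as the paper's: decompose $e_i$ into the agent average $\bar e$ (which the reset \eqref{eq:syncAvg} leaves invariant and which inherits the stable $(I-LC)A$ dynamics with bounded forcing) plus a deviation that the reset annihilates every $K$ steps, so that the switched dynamics $\tilde{A}_{I(k)}$ need not be stable. The only cosmetic difference is that you bound the pairwise errors $e_{ij}$ over each finite reset window, whereas the paper tracks the deviations-from-average $\bar e_i(k)=\bar x(k)-\hat x_i(k)$ directly; the two are equivalent via $e_i-\bar e=-\frac{1}{N}\sum_j e_{ij}$.
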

\begin{proof} 
Since the agent error \eqref{eq:epsilon_ci_di} is affected by the resetting \eqref{eq:syncAvg}, we first rewrite $e_i(k)$ in terms of the average estimate $\bar{x}(k) := \avg( \hat{x}_i(k) ) := \frac{1}{N} \sum\nolimits_{i=1}^N \hat{x}_i(k)$.  Defining $\bar{e}(k) := \hat{x}_\text{c}(k) - \bar{x}(k)$ and $\bar{e}_i(k) := \bar{x}(k) - \hat{x}_i(k)$, we have $e_i(k) = \bar{e}(k) + \bar{e}_i(k)$ and will establish the claim by showing boundedness of $\bar{e}_i(k)$ and $\bar{e}(k)$ separately.

For the average estimate $\bar{x}(k)$,
we obtain from \eqref{eq:EBSE1_multi}, \eqref{eq:EBSE2_multi},
\begin{align*}
&\bar{x}(k|k-1) = A \bar{x}(k-1) + B  u(k-1) \\
&\bar{x}(k) = \bar{x}(k|k\!-\!1) + \!\!\! \sum_{\ell \in I(k)} \!\! L_\ell \big( y_\ell(k) - C_\ell \bar{x}(k|k\!-\!1) \big) + \bar{d}(k)
\end{align*}
where $\bar{x}(k|k-1) := \avg(\hat{x}_i(k|k-1))$
and $\bar{d}(k) := \avg(d_i(k))$.
The dynamics of the error $\bar{e}_i(k)$ are described by 
\begin{align}
\bar{e}_i(k) 
&= 
\tilde{A}_{I(k)} \bar{e}_i(k-1) 
+\bar{d}(k)-d_i(k) 
 \label{eq:epsj}  \\
\bar{e}_i(k+) &= 0, \qquad \text{for $k = \kappa K$ with some $\kappa \in \N$} \label{eq:epsj_reset} 
\end{align}
where \eqref{eq:epsj} is obtained by direct calculation analogous to \eqref{eq:epsij_dyn}, and \eqref{eq:epsj_reset} follows from \eqref{eq:syncAvg}.  
Since $\bar{d}(k)$, $d_i(k)$, 
and $\tilde{A}_{I(k)}$ are all bounded, boundedness of  $\bar{e}_i$ for all $i \in \N_N$ follows.

Since $\bar{e}(k) = \avg_i(e_i(k))$, we obtain from \eqref{eq:epsilon_ci_di}
\begin{align}
\bar{e}(k)
&= (I-LC)A \bar{e}(k-1)   \nonumber \\
&\phantom{=}+ \sum\limits_{\ell \in \Ic(k)}  L_\ell \big( y_\ell(k) - C_\ell \hat{x}_\ell(k|k-1) \big) -\bar{d}(k) \nonumber \\
&\phantom{=}- \sum\limits_{j \in \Ic(k)}  L_j C_j A \bar{e}_{j}(k-1) 
\label{eq:eavg}
\end{align}
where 
we used $\avg_i(e_{ij}(k)) = \avg_i (\hat{x}_i(k) - \hat{x}_j(k)) = \bar{x}(k) - \hat{x}_j(k) = \bar{e}_j(k)$.
Note that \eqref{eq:eavg} fully describes the evolution of $\bar{e}(k)$.  In particular, the resetting \eqref{eq:syncAvg} does not affect $\bar{e}(k)$ because, at $k=\kappa K$, it holds 
\begin{align}
\bar{e}(k+) &= \hat{x}_\text{c}(k) - \frac{1}{N} \sum_{j=1}^N \hat{x}_j(k+) \nonumber \\
&=\hat{x}_\text{c}(k) - \frac{1}{N} \sum_{j=1}^N \Big( \frac{1}{N} \sum_{\ell=1}^N \hat{x}_\ell(k-) \Big) \nonumber \\
&=\hat{x}_\text{c}(k) - \frac{1}{N} \sum_{\ell=1}^N \hat{x}_\ell(k-) = \bar{e}(k-) .
\end{align}
%
All input terms in \eqref{eq:eavg} are bounded: $\bar{d}$ by \Assump \ref{ass:bounded_di}, $\sum_{\ell \in \Ic(k)} L_\ell (y_\ell(k) - C_\ell \hat{x}_\ell(k|k-1))$ by \eqref{eq:Ibar},
and $\bar{e}_j$ 
by the previous argument.  
The claim then follows from stability of $(I-LC)A$.
\end{proof}

\subsubsection{Estimation error}
\label{sec:estErrMulti}
By means of \eqref{eq:estError_ei} with \The \ref{thm:epsilon_ci_notIdealized} or \The \ref{thm:epsilon_ci_syncAvg}, properties about the agent's estimation error $\epsilon_i(k) = x(k) - \hat{x}_i(k)$ can be derived given properties of the disturbances $v$, $w$, and the centralized estimator.  
For example, Corollaries \ref{cor:estErrDet} and \ref{cor:estErrStoch} apply analogously also for the multi-agent case.

\section{Distributed control}
\label{sec:control}
In this section, we address \Pro \ref{pro:EBC}; that is,
the scenario where the local estimates $\hat{x}_i$ on the $\Nest$ estimators are used for feedback control.  

Recall the decomposition \eqref{eq:u_decomposed} of the control input, where $u_i(k)$ is the input computed on estimator agent $i + \Nsens$.
Assume a centralized state-feedback design is given
\begin{equation}
u(k) = F \, x(k)
\label{eq:stateFeedback}
\end{equation}
with controller gain $F \in \R^{q \times n}$ such that $A+BF$ is stable.  
We propose the distributed state-feedback control law
\begin{equation}
u_i(k) = F_i \, \hat{x}_{i+\Nsens}(k), \quad i \in \N_{\Nest}
\label{eq:stateFeedbackDistributed}
\end{equation}
where $F_i \in \R^{q_i \times n}$ is the part of the gain matrix $F$ in \eqref{eq:stateFeedback} corresponding to the local input $u_i$.  Same as for the emulation-based estimator design in previous sections, the feedback gains do not need to be specifically designed, but can simply be taken from the centralized design \eqref{eq:stateFeedback}.

\subsection{Closed-loop stability analysis}
Using \eqref{eq:estError_ei} and \eqref{eq:stateFeedbackDistributed}, the state equation \eqref{eq:system_x} can be rewritten as
\begin{align}
x(k) &= (A+BF) x(k-1) - \! \sum_{i \in \N_{\Nest}} \! B_i F_i \, \epsilon_{i+\Nsens}(k-1) + v(k-1)
\label{eq:stateEquationWFeedback}
\end{align}
where $\epsilon_{i+\Nsens}(k-1)$ are the estimation errors of the estimator agents (\cf \sect \ref{sec:estErrMulti}). Closed-loop stability can then be deduced leveraging the results of \sect \ref{sec:multiAgent}.
\begin{theorem}
\label{thm:boundednessControl}
Let the assumptions of either \The \ref{thm:epsilon_ci_notIdealized} or \The \ref{thm:epsilon_ci_syncAvg} be satisfied, $A+BF$ be stable, and $v$ and $w$ bounded.
Then, the state 
of the closed-loop
control system given by \eqref{eq:system_x}, \eqref{eq:system_y}, \eqref{eq:eventTrigger_MB_multi}, \eqref{eq:EBSE1_multi},  \eqref{eq:EBSE2_multi}, \eqref{eq:stateFeedbackDistributed}, and (possibly) \eqref{eq:syncAvg}
is bounded. 
\end{theorem}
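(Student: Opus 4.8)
The plan is to combine the boundedness of the estimation errors $\epsilon_i$ (already available from Section~\ref{sec:multiAgent}) with the stability of $A+BF$, treating the controlled system~\eqref{eq:stateEquationWFeedback} as a stable linear recursion driven by bounded inputs. First I would invoke either \The~\ref{thm:epsilon_ci_notIdealized} or \The~\ref{thm:epsilon_ci_syncAvg}, together with Corollary~\ref{cor:estErrDet} (which applies analogously in the multi-agent case, as noted in Section~\ref{sec:estErrMulti}), to conclude that each estimator agent's estimation error is bounded, $\|\epsilon_{i+\Nsens}\|_\infty \leq \epsilon_i^\text{max} < \infty$ for $i \in \N_{\Nest}$. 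The one subtlety here is that Corollary~\ref{cor:estErrDet} requires bounded $v$ and $w$, which is exactly the extra hypothesis assumed in this theorem; so the chain "bounded $v,w$ $\Rightarrow$ bounded $\epsilon_\text{c}$ $\Rightarrow$ (via bounded $e_i$) bounded $\epsilon_i$" goes through unchanged.

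Next I would treat~\eqref{eq:stateEquationWFeedback} as the recursion $x(k) = (A+BF)x(k-1) + \eta(k-1)$ with lumped input $\eta(k-1) := -\sum_{i \in \N_{\Nest}} B_i F_i\, \epsilon_{i+\Nsens}(k-1) + v(k-1)$. Since $A+BF$ is stable, there exist $m>0$ and $\rho \in [0,1)$ with $\|(A+BF)^k\| \leq m\rho^k$ (same argument as~\eqref{eq:expStabCentral}, citing \cite[p.~212--213]{CaDe91}). The input $\eta$ is bounded because it is a finite sum of the bounded signals $\epsilon_{i+\Nsens}$ (scaled by the fixed matrices $B_i F_i$) and the bounded disturbance $v$; explicitly, $\|\eta\|_\infty \leq \sum_{i \in \N_{\Nest}} \|B_i F_i\|\,\epsilon_i^\text{max} + v^\text{max}$. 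Then by \cite[p.~218, Thm.~75]{CaDe91} — the same exponential-stability/bounded-input lemma used repeatedly above — $x$ is bounded, with $\|x\|_\infty \leq m\|x(0)\| + \frac{m}{1-\rho}\|\eta\|_\infty$.

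There is essentially no hard step: the whole argument is an assembly of results already proved in the excerpt, exploiting the fact that the estimation-error analysis of Section~\ref{sec:multiAgent} was carried out \emph{independently} of how $u$ is generated (the error dynamics~\eqref{eq:epsilon_ci_di},~\eqref{eq:epsij_dyn} do not involve $u$), so no circularity arises from closing the loop with~\eqref{eq:stateFeedbackDistributed}. The one point requiring a word of care — and the closest thing to an obstacle — is verifying that \Assump~\ref{ass:knownInput} remains consistent in the feedback setting: each $u_i(k) = F_i \hat{x}_{i+\Nsens}(k)$ must be broadcast so that all agents can run their estimator predictions~\eqref{eq:EBSE1_multi}; this is precisely the periodic input-exchange already discussed after \Assump~\ref{ass:knownInput}, so the estimator-side results carry over verbatim. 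I would close by remarking that, under the stochastic hypotheses of Corollary~\ref{cor:estErrStoch} instead, one obtains the analogous statement that $\E[x(k)]$ remains bounded.
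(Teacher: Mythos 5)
Your proposal is correct and follows essentially the same route as the paper's proof: bounded $v,w$ and stability of $(I-LC)A$ give a bounded centralized error $\epsilon_\text{c}$, Theorems~\ref{thm:epsilon_ci_notIdealized} or \ref{thm:epsilon_ci_syncAvg} bound $e_i$ so that $\epsilon_i=\epsilon_\text{c}+e_i$ is bounded, and then \eqref{eq:stateEquationWFeedback} is a stable recursion in $A+BF$ driven by bounded inputs. Your explicit remarks on non-circularity (the error dynamics do not involve $u$) and on the input-broadcast needed for \Assump\ref{ass:knownInput} are sound additions that the paper leaves implicit.
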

\begin{proof}
Since $(I-LC)A$ is stable and $v$, $w$ bounded, it follows from \eqref{eq:closedLoopCentralized_est} that the estimation error $\epsilon_\text{c}(k)$ of the centralized observer is also bounded.  Thus, \eqref{eq:estError_ei} and \The \ref{thm:epsilon_ci_notIdealized} or \ref{thm:epsilon_ci_syncAvg} imply that all  $\epsilon_i$, $i \in \mathbb{N}_\Nall$, are bounded. 
Hence, it follows from \eqref{eq:stateEquationWFeedback}, stability of $A+BF$, and bounded $v$ that $x$ is also bounded.
\end{proof}

Satisfying \Assump \ref{ass:knownInput} for the above result requires the periodic communication of all inputs over the bus.  
While this increases the network load, it can be a viable option if the number of inputs is comparably small.  Next, we briefly present an alternative scheme, where the communication of inputs is reduced also by means of event-based protocols.

\subsection{Event-based communication of inputs}

Each estimator agent computes 
$u_i(k)$ according to \eqref{eq:stateFeedbackDistributed}
and broadcasts an update to the other agents whenever there has been a significant change:
\begin{equation}
\text{transmit $u_i(k)$} \;\; 
\Leftrightarrow \;\;
\|u_i(k) -  u_{i, \text{last}}(k) \| \geq \deltac_i
\label{eq:eventTriggerCtrl}
\end{equation}
where $\deltac_i \geq 0$ is a tuning parameter, and $u_{i, \text{last}}(k)$ is the last input that was broadcast by agent $i$. 
%

Each agent $i$ maintains an estimate $\hat{u}^i(k) \in \R^q$ of the complete input vector $u(k)$;
agent $i$'s estimate of agent $j$'s input is 
\begin{equation}
\hat{u}^i_j(k)
=
\begin{cases}
u_j(k) & \text{if \eqref{eq:eventTriggerCtrl} triggered} \\
\hat{u}^i_j(k-1) & \text{otherwise}.
\end{cases}
\label{eq:inputEstSingle}
\end{equation}
The agent then uses $\hat{u}^i(k-1) = ( \hat{u}^i_1(k-1), \hat{u}^i_2(k-1), \dots, \hat{u}_N(k-1) )$ 
instead of the true input $u(k-1)$ for the estimator update \eqref{eq:EBSE1_multi}. 
Since the error   $\tilde{u}^i(k) := u(k) - \hat{u}^i(k)$ from making this approximate update is bounded by the event trigger \eqref{eq:eventTriggerCtrl}, the stability results presented in \sect \ref{sec:multiAgent} can be extended to this case. 
The details are omitted, but can be found in \cite{Tr15arxiv}.


\section{Experiments}
\label{sec:experiments}
To illustrate the proposed approach for event-based state estimation, we present  numerical simulations of a benchmark problem \cite{GrHiJuEtAl14} and summarize experimental results from \cite{Tr12}.
Both are examples of the multi-agent case where local estimates are used for control.

\subsection{Numerical simulation of a thermo-fluid benchmark process}
\label{sec:simulationExample}
We consider distributed event-based control of a thermo-fluid process, which has been proposed as a benchmark problem in \cite{GrHiJuEtAl14,SiStGrLu15}.
%
Matlab files to run the simulation example are provided as supplementary material (\url{http://tiny.cc/DEBSEsuppl}).

The process has two tanks containing fluids, whose level and temperature are to be regulated by controlling the tanks' inflows, as well as heating and cooling units.  Both tanks are subject to step-like disturbances, and their dynamics are coupled through cross-flows between the tanks.  Each tank is associated with a control agent responsible for computing commands to the respective actuators.  Each agent can sense the temperature and level of its tank. For details on the process, refer to \cite{GrHiJuEtAl14,SiStGrLu15}.

\subsubsection{System description}
\label{sec:SystemDescriptionThermofluid}
The discrete-time linear model \eqref{eq:system_x}, \eqref{eq:system_y} 
is obtained by zero-order hold discretization  with $T_\text{s} = 0.2 \, \text{s}$ of the continuous-time model given in \cite[Sec.~5.8]{GrHiJuEtAl14}.
The process dynamics are stable.
The states and inputs of the system are summarized in Table~\ref{tab:StatesAndInputsThermofluid}.
%
Noisy state measurements 
\begin{equation*}
y(k) = x(k) + w(k)
\end{equation*}
are available, 
where $w(k)$ is uniformly distributed.
The numerical parameters of the model are available in the supplementary files.

\begin{table}
\renewcommand{\arraystretch}{1.1}
\caption{States and inputs of the thermo-fluid process.
}
\label{tab:StatesAndInputsThermofluid}
\centering
\begin{tabular}{lll}
\hline 
{\bf States} && {\bf Unit}   \\ 
\hline 
$x_1(k)$ & level tank 1  & m  \\
$x_2(k)$ & temperature tank 1 & K   \\
$x_3(k)$ & level tank 2  & m  \\
$x_4(k)$ & temperature tank 2 & K   \\ 
\hline 
{\bf Inputs} && {\bf Unit}   \\ 
\hline 
$u_{11}(k)$ & inflow tank 1  & 1 (normalized) \\
$u_{12}(k)$ & cooling tank 1  & 1 (normalized) \\
$u_{21}(k)$ & inflow tank 2  & 1 (normalized) \\
$u_{22}(k)$ & heating tank 2  & 1 (normalized) \\ 
\hline 
\end{tabular} 
\end{table}


Similar to the distributed architecture in \cite{SiStGrLu15}, we consider two agents (one for each tank) exchanging data with each other over a network link, see \fig \ref{fig:EBC_MultiAgent} (with $N=2$) and \tab \ref{tab:AgentsThermofluid} for inputs/output definitions.  
Each agents combines the functions of sensing, estimation, and control.  To save computational resources, an agent runs a single estimator and uses it for both event triggering \eqref{eq:eventTrigger_MB_multi} and feedback control \eqref{eq:stateFeedbackDistributed} (see \cite{TrDAn11} for an alternative architecture with two estimators).
%
\begin{table}[tb]
\caption{Agents in the thermo-fluid example.  Agent 1 measures $y_1 = (y_{11}, y_{12})$ (level and temperature) of its tank and is responsible for controlling $u_1 = (u_{11}, u_{12})$ (inflow and cooling); and Agent 2 accordingly.
%
}  
\label{tab:AgentsThermofluid}
\centering
%
%
%
\begin{tabular}{|l|ll|}
\hline 
{\bf Agent \#} & 1 & 2 \\ \hline 
{\bf Actuator}  & $u_1 = (u_{11}, u_{12})$ & $u_2 = (u_{21}, u_{22})$ \\ \hline
{\bf Sensors}  & $y_1 = (y_{11}, y_{12})$ & $y_2 = (y_{21}, y_{22})$  \\ \hline 
\end{tabular} 
\end{table}

\begin{figure}[tb]
\centering
\includegraphics[width=0.55\columnwidth]{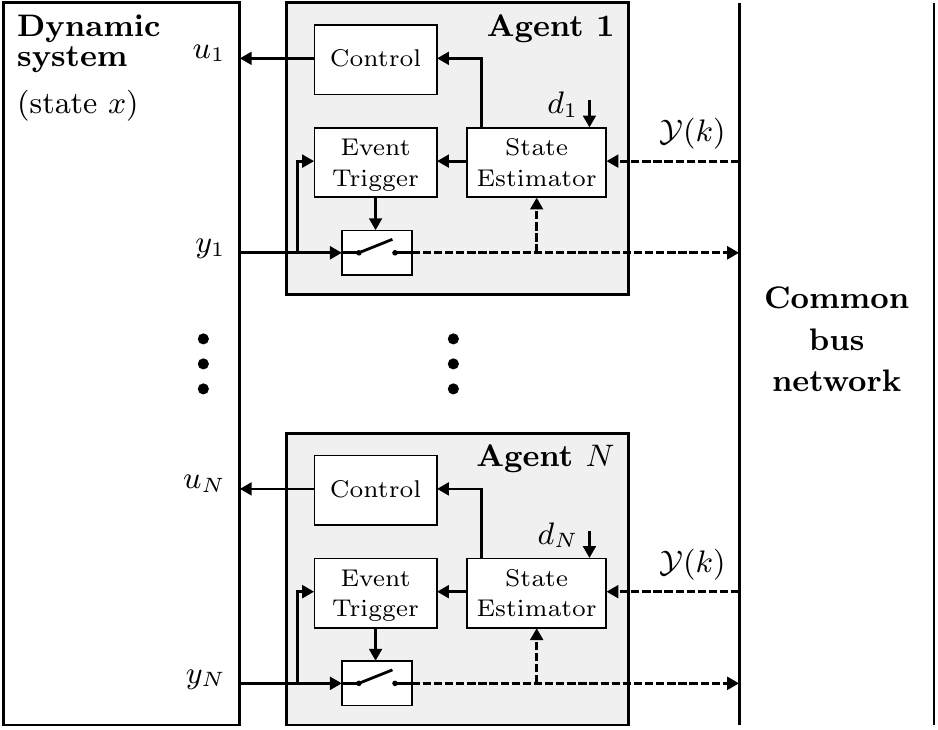}
\caption{Distributed, event-based control architecture in the experiments of \sect \ref{sec:experiments}.  $N$ spatially distributed agents observe and control a dynamic system and exchange data with each other via a common bus.  Compared to \fig \ref{fig:EBSE_MultiAgent}, the agents combine the functions of sensing/triggering, estimation, and additionally feedback control.  The State Estimator serves both for making the triggering decision and for feedback control.}
\label{fig:EBC_MultiAgent}  
\end{figure}

In order to study the effect of imperfect communication, we simulate random packet drops
such that a transmitted  measurement $y_i(k)$ is lost with probability 0.05, independent of previous drops.
Packet drops can be represented by the disturbance $d_i(k)$ in \eqref{eq:EBSE2_multi} as 
follows: if $y_\ell(k)$, $\ell \in I(k)$ is a measurement not received at agent $i$, then $d_i(k) = -L_\ell ( y_\ell(k) - C_\ell \hat{x}_i(k|k-1) )$ accounts for the lost packet. 
For simplicity, we assume that communicated inputs are never lost.

\subsubsection{Event-based design}
Each agent implements the event triggers \eqref{eq:eventTrigger_MB_multi} and \eqref{eq:eventTriggerCtrl}, the estimator \eqref{eq:EBSE1_multi}, \eqref{eq:EBSE2ideal_multi}, and the distributed control \eqref{eq:stateFeedbackDistributed}. Triggering decisions are made individually for the two sensors of each agent, but jointly for both inputs (\cf \tab \ref{tab:AgentsThermofluid}).


For the design of the centralized observer \eqref{eq:FCSE1}, \eqref{eq:FCSE2}, we chose $L =$ $\diag(0.1, 0.05, 0.1, 0.05)$ as observer gain, leading to stable $(I-LC)A$.
%
For this design, the inter-agent error dynamics \eqref{eq:epsij_dyn} are also stable:  by direct calculation, one can verify that \eqref{eq:LMI_cond} is satisfied with $P = \diag(500, 1, 500, 1)$ for all subsets $J \subseteq \{1,2,3,4\}$ (\cf supplementary material).  Lemma \ref{lem:bounded_eij} thus guarantees that \eqref{eq:epsij_dyn} is stable, and synchronous resetting \eqref{eq:syncAvg} not necessary. 

%

%

The state-feedback gain $F$ is obtained from an LQR design, 
 which involves full couplings between all states in contrast to the decentralized design in \cite{SiStGrLu15}.
The triggering thresholds 
are set to $\deltae_{11} = \deltae_{21} = 0.01 \, \text{m}$, $\deltae_{12} = \deltae_{22} = 0.2 \, \text{K}$, and $\deltac_1 = \deltac_2 = 0.02$.  

\subsubsection{Simulation results}
The state trajectories of a $2000 \, \text{s}$ simulation run under event-based communication are shown in \fig \ref{fig:exampleSimRum_thermoFluid_x}. 
Step-wise disturbances $v$
(gray shaded areas) with comparable magnitudes as in \cite{SiStGrLu15} cause the states to deviate from zero.
Especially at times when disturbances are active, the event-based estimate is slightly inferior to the centralized one, as is expected due to the reduced number of measurements.  
%
\begin{figure}[tbp]
\centering
\includegraphics{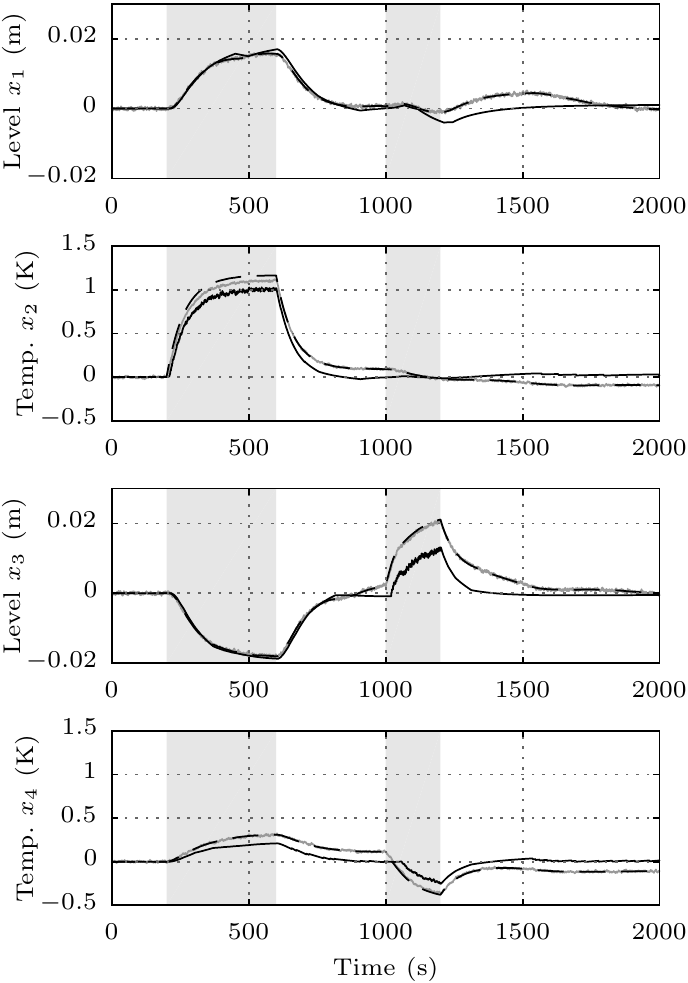}
\caption{State trajectories for the thermo-fluid simulation example.  {\bf Black (dashed):} the actual states $x$; {\bf black (solid):} event-based estimate $\hat{x}_1$ by Agent 1; {\bf gray:} centralized estimate $\hat{x}_\text{c}$.  The centralized estimate is shown for comparison and not available on any of the agents. The gray shaded areas indicate periods where step-wise process disturbances are active.}
\label{fig:exampleSimRum_thermoFluid_x}
\end{figure}

The average communication rates for event-based input and sensor transmissions 
 are given in \fig \ref{fig:exampleSimRum_thermoFluid_comm}.  Clearly, communication rates increase in the periods where the disturbances are active, albeit not the same for all sensors and inputs.  At times when there is no disturbance, communication rates are very low.  
%
%
\begin{figure}[tbp]
\centering
\includegraphics{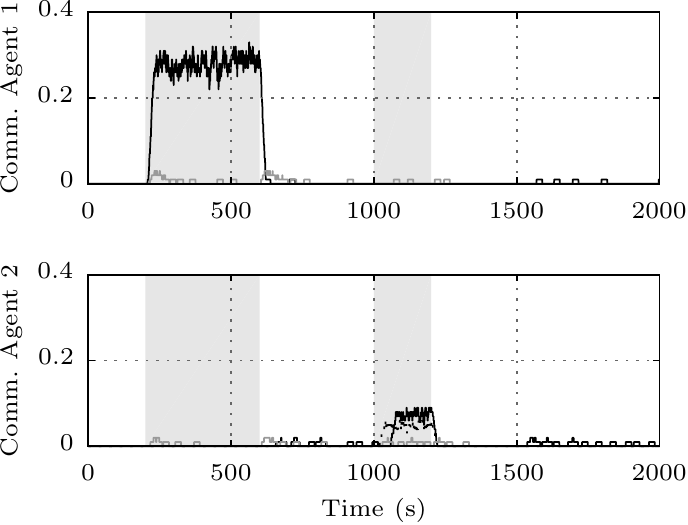}
\caption{Event-based communication rates for the thermo-fluid simulation example: communication of level measurements $y_{11}$ and $y_{21}$ in {\bf black (dotted)}, for temperature measurements $y_{12}$ and $y_{22}$ in {\bf black (solid)}, and for the inputs $u_1$ and $u_2$ in {\bf gray}.  Communication rates are computed as the moving average over 100 steps (0.0 meaning no communication and 1.0 full communication).}
\label{fig:exampleSimRum_thermoFluid_comm}
\end{figure}

Figure \ref{fig:exampleSimRum_thermoFluid_e12} shows the inter-agent error $e_{12}$.
Jumps in the error signals are caused by dropped packets, with decay afterward due to stable dynamics \eqref{eq:epsij_dyn}. 
%
\begin{figure}[tbp]
\centering
\includegraphics{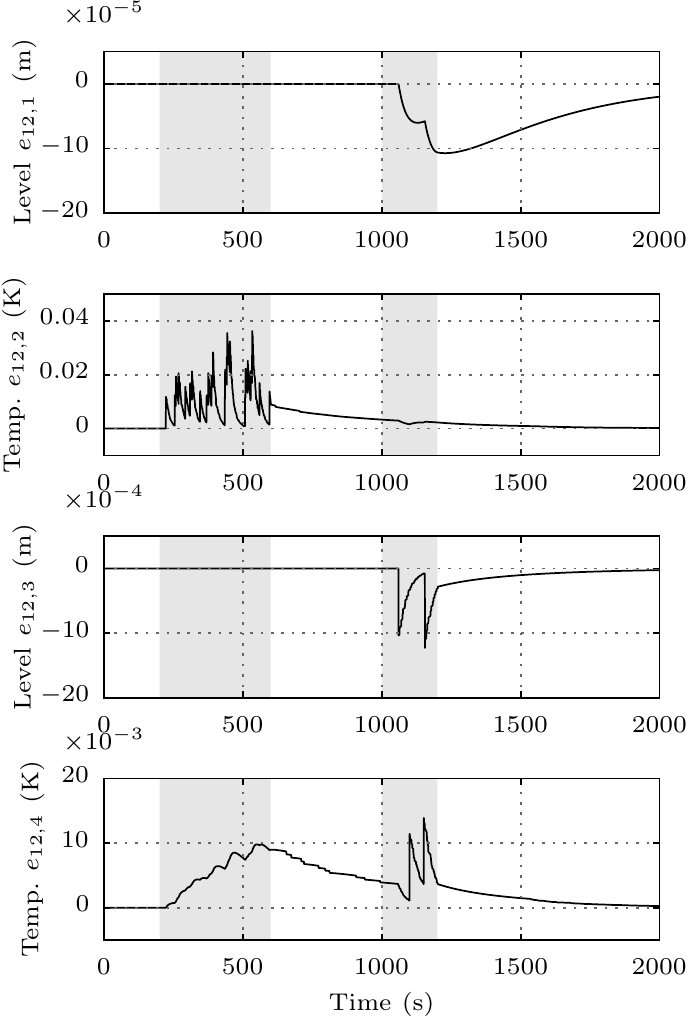}
\caption{Inter-agent error $e_{12}$ for the thermo-fluid example. Jumps in error are caused by packet drops, and the decay afterward is due to stable inter-agent dynamics \eqref{eq:epsij_dyn} as ensured by \Lem \ref{lem:bounded_eij}.}
\label{fig:exampleSimRum_thermoFluid_e12}
\end{figure}

\subsection{Experiments on the Balancing Cube}
\label{sec:experimentsBC}
The proposed emulation-based approach to event-based estimation was applied in \cite{Tr12} for stabilizing the Balancing Cube \cite{TrDAn12b} (see \fig \ref{fig:BC}).  
In this section, we summarize the main results from the experimental study reported in \cite{Tr12}.  For details, we refer to these citations.

\begin{figure}[tbp]
\centering
\includegraphics[width=9cm]{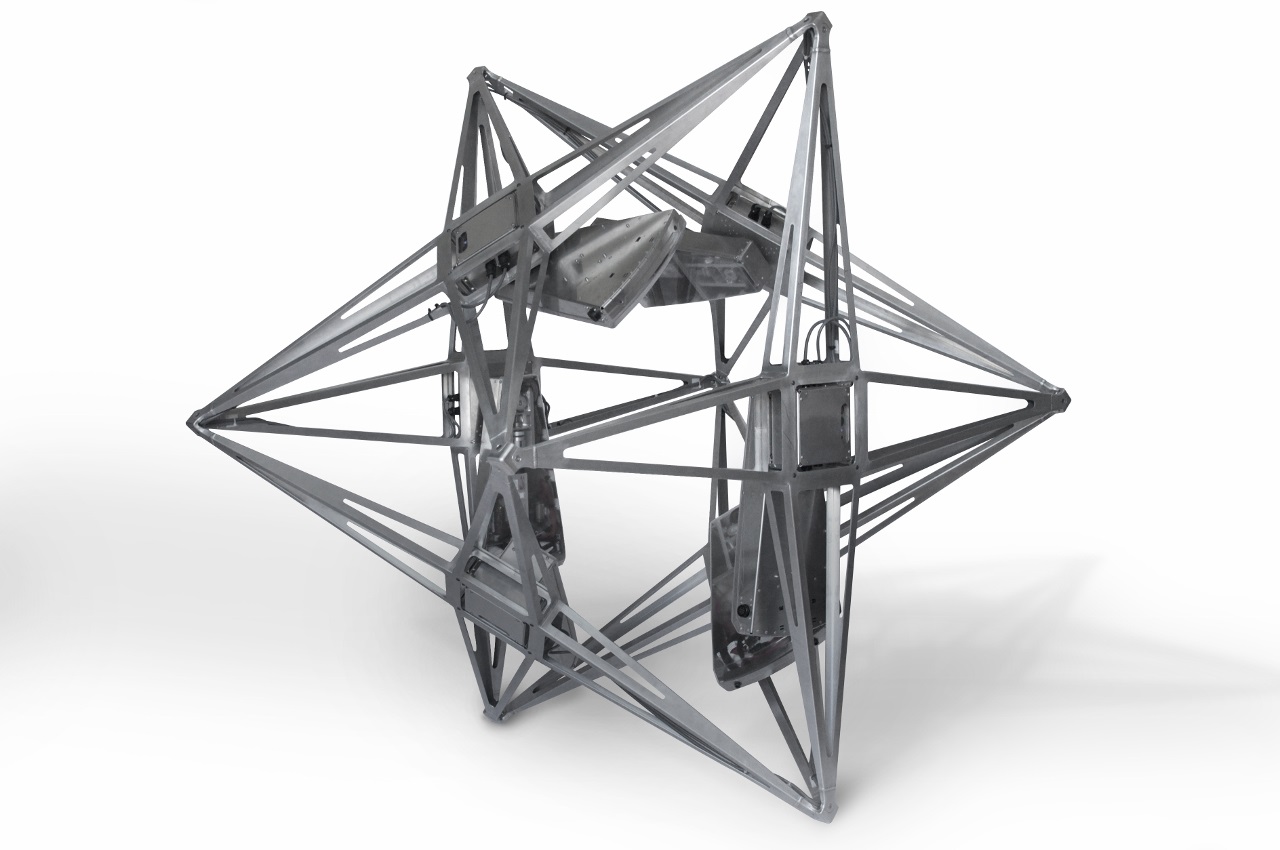}
\caption{The Balancing Cube \cite{TrDAn12b} can balance autonomously one any one of its corners or edges.  Six rotating arms, which constitute the control agents,  collaboratively keep the system in balance.  Here, the cube is shown while balancing on an edge, which was the configuration used for the experiments in \cite{Tr12}. (Photo: Carolina Flores, IDSC, ETH Zurich; \copyright 2012 IEEE. Reprinted, with permission, from \cite{TrDAn12b}.)
}
\label{fig:BC}
\end{figure}

\subsubsection{System description}
The cube is stabilized through six rotating arms on its rigid structure (see \fig \ref{fig:BC}).  Each arm constitutes a control agent equipped with sensors (angle encoder and rate gyroscopes), a DC motor, and a computer.  The computers are connected over a CAN bus, which supports the exchange of sensor data between all agents (including the worst case of all agents communicating within one sampling time $\Ts = 1/60 \, \text{s}$).  
Each agent thus combines the functions sensing, triggering, estimation, and control as in \fig \ref{fig:EBC_MultiAgent} ($N=6$).

\subsubsection{Event-based design}
A model \eqref{eq:system_x}, \eqref{eq:system_y} representing linearized dynamics about the equilibrium configuration shown in \fig \ref{fig:BC} is used for designing the centralized observer \eqref{eq:FCSE1}, \eqref{eq:FCSE2} (as a steady-state Kalman filter) and the controller \eqref{eq:stateFeedback} (linear quadratic regulator).
Each agent makes individual triggering decisions for its angle sensor and for its rate gyroscope with thresholds $\delta^\text{ang} = \unit[0.008]{rad}$ and $\delta^\text{gyro} = \unit[0.004]{rad/s}$, respectively.

In the experiments, control inputs $u_i$ were communicated periodically between all agents.  Synchronous resetting \eqref{eq:syncAvg} was \emph{not} applied, even though stability of the inter-agent error \eqref{eq:epsij_dyn} cannot be shown using \Lem \ref{lem:bounded_eij} because of unstable open-loop dynamics.  
Despite the absence of a formal proof, the system was found to be stable in balancing experiments.

\subsubsection{Experimental results}
Figure \ref{fig:BC_commRates} 
shows typical communication rates for some sensors during balancing.  
The desired behavior of event-based communication is well visible: feedback happens \emph{only when necessary} (\eg instability or disturbances).  Overall, the network traffic could be reduced by about 78\%
at only a mild decrease in estimation performance.

\begin{figure}[tbp]
\centering
\includegraphics{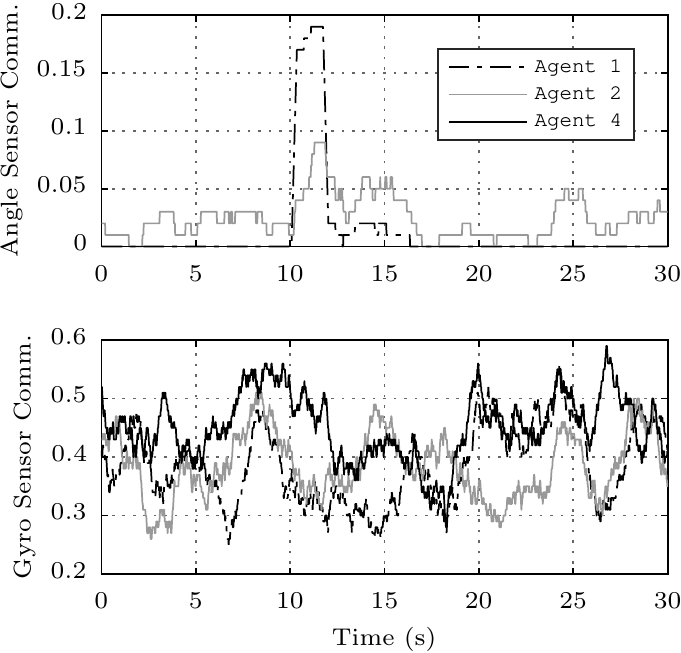}
\caption{Experimental communication rates on the Balancing Cube (reproduced from \cite{Tr12}). The communication rates (between 0 and 1) are computed as moving average over the last 100 time steps.  The rate gyroscopes generally transmit at higher rates than the angle sensors since they observe the unstable mode of the system.  The angle measurements can be predicted well from the process model; thus only little communication is necessary (\eg Agent 4 does not transmit over $30 \, \text{s}$). Caused by an external disturbance applied at $10 \, \text{s}$ on Agent 1 (pushing the arm), the communication rate of Agent 1 goes up temporarily.
}
\label{fig:BC_commRates}
\end{figure}

\section{Concluding remarks}
Simplicity of design and implementation are key features of the emulation-based approach to event-based state estimation developed herein.  
The approach directly builds on a classic centralized, linear, discrete-time state observer design.  Essentially, only the even triggers \eqref{eq:eventTrigger_MB_multi} and \eqref{eq:eventTriggerCtrl}, 
and (for some problems) synchronous resetting \eqref{eq:syncAvg} must be added.
The estimator structure, as well as the transmitted quantities 
remain unchanged, and no redesign of gains is necessary.  
The performance of the periodic design can be recovered by choosing small enough triggering thresholds, which simplifies tuning in practice.  Thus, implementation of the event-based system requires minimal extra effort,
and virtually no additional design knowledge.

With the proposed event-based method, the average communication load in a networked control system can be significantly reduced, as demonstrated in the simulations and experiment in this article.  
\section{Acknowledgment}
This work was supported in part by the Swiss National Science Foundation, the Max Planck Society, and the Max Planck ETH Center for Learning Systems.



\bibliographystyle{iet}        
{\small \bibliography{literature}}

\end{document}